%% file: submission.tex
\documentclass[11pt]{article}
\usepackage{graphicx} %
 
\title{Nearly-Tight Bounds for Zonotope Containment and Beyond}

\author{ 
Friedrich Eisenbrand \thanks{EPFL, Switzerland, \textsf{\{friedrich.eisenbrand, matteo.russo, ruben.skorupinski\}@epfl.ch}}
\and
Thomas Rothvoss  \thanks{ University of Washington, USA,  \texttt{\textsf{rothvoss@uw.edu}}}
\and
Matteo Russo\footnotemark[1] 
\and
Ruben Skorupinski\footnotemark[1]
} 

\date{}

\input{settings/packages}

\input{settings/macros}

\input{settings/environments}

\begin{document}

\maketitle

\begin{abstract}
    \noindent 
    We investigate the convex-body containment problem
    $\max\{s >0 : s Z \subseteq Q\}$, where the outer body
    $Q \subseteq \R^d$ is described by a membership oracle and the
    inner body $Z \subseteq \R^d$ is a zonotope. Our main result is a
    sampling-based $O(\sqrt{d})$-approximation algorithm for this
    problem that almost matches the lower bound of
     $\Omega(\sqrt{\nicefrac{d}{\log d}})$
by      Khot and
    Naor %
    in the oracle
    model. %
    Assuming zonotopes can be sparsified by a linear number of
    generators, which is referred to as \emph{Talagrand conjecture},
    our approach attains the optimal approximation factor of $\Theta(\sqrt{\nicefrac{d}{\log d}})$.
    Our second main result is a proof  of Talagrand's conjecture for \emph{$\Delta$-modular zonotopes}
    whenever $\Delta$ is constant. Those zonotopes are of the form $Z = \{ Wx \colon \| x\|_\infty \leq 1\}$ where the non-zero $d \times d$ sub-determinants of $W$ are between $1$  and $\Delta$. This result establishes  a connection between zonoid sparsification %
 and    spectral sparsification  of Batson, Spielman and Srivastava. We complement these results with a \emph{universal} $\Omega(\sqrt{\nicefrac{d}{\log d}})$ lower bound holding for \emph{all} zonotopes. %

    \smallskip
    \noindent 
  Finally, we consider containment problems $\max\{s >0 : s K \subseteq Q\}$, for general convex bodies $K \subseteq \R^d$. %
    A result of Nasz{\'o}di %
    on approximating  $K \subseteq \R^d$ by a
    polytope implies a $\Theta(\nicefrac{d}{\log d})$ approximation algorithm in
    polynomial time. We show the tightness of this approximation factor in the oracle model  via a reduction to
    the circumradius computation. Our lower bound
    holds for centrally symmetric convex sets, implying that Barvinok's optimal $O(\sqrt{d})$-approximation of a centrally symmetric convex body 
    by a polytope with a polynomial number of vertices %
    cannot be
    computed in polynomial time. 
\end{abstract}

\section{Introduction}\label{sec:introduction}

 The \emph{containment problem} for convex bodies is as follows.
Given two centered convex bodies $K,Q \subseteq \R^d$,  i.e., containing the origin in their relative interior, determine by how  much $K$ can be scaled while still being contained in $Q$.
Formally, it is the following optimization problem: 
\begin{equation}
    \label{eq:containment-opt}
    (K,Q) \textsc{-Opt-Containment}:
    \max\{ \alpha >0 \mid \alpha  ⋅ K \subseteq Q\}.
  \end{equation}

\begin{figure}[H]
    \centering
    \includegraphics[width=0.4\linewidth]{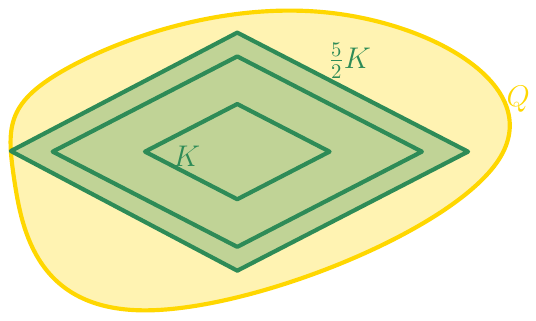}
 \end{figure}

If the outer body $Q$ is a polytope given in its inequality representation, then $(K,Q) \textsc{-Opt-Containment}$ can be solved in polynomial time simply by optimizing all normal vectors of $Q$ over $K$.
However, already when both bodies are polytopes, and the inner polytope is
described by inequalities 
 while the outer one is described by its vertices, this problem is NP-hard \citep{freund1985complexity}. Here, we focus on approximating the optimal value $\alpha^\star$ of~\eqref{eq:containment-opt} within a factor $s(d)≥1$ that depends on the dimension $d$. Thus our task is to find $α∈ ℝ$ such that $α ≤\alpha^\star ≤ s(d) ⋅ α$ holds. 
Via binary search, this approximation problem is  polynomial-time equivalent to  the following \emph{gap decision} version of the problem: %
\begin{align}\label{eq:gap}
    (K,Q) \textsc{-Gap-Containment}: \begin{cases}
    \textsc{Yes,} &K \subseteq Q \\
    \textsc{No,}  &s(d) ⋅ K \nsubseteq Q
    \end{cases}.
\end{align}
An algorithm \emph{solves} the instance $(K,Q)$ of \textsc{Gap-Containment} if it correctly asserts whether it is a \textsc{Yes} or a \textsc{No} instance. It is not required to assert that both conditions hold, if this is the case.

Containment is closely related to classical geometric approximation tasks such as estimating radii, widths, and diameters of convex bodies, and to the general theme of approximating geometric functionals using oracle access, see, e.g., \citep{brieden1998approximation, Simonovits03}.    
We study the containment problem  %
with a particular focus on the case in which the inner body  $K=Z$ is a zonotope and $Q$ is given by a membership oracle. %
A \emph{zonotope} is a set of the form
\begin{equation}
    \label{eq:zonotope-def}
    Z(W) = %
    \{Wx : x ∈ ℝ^n, \, \|x\|_\infty\le 1\}, %
\end{equation}
where $W \in \R^{d\times n}$ is the \emph{generator matrix} of the zonotope. The columns $w_1,\ldots,w_n \in \R^d$ of $W$ are the \emph{generators} of $Z(W)$.
Zonotopes are a well-studied class of convex polytopes  with rich structure and connections to functional analysis and convex geometry, see, e.g.~\citep{BourgainLM89}. 
The \emph{zonotope containment} problem $Z ⊆Q$ has received considerable attention in the recent literature of fields such as control theory, neural networks or complexity theory \citep{althoff2016combining, kulmburg2021co, KulmburgSA25, FroeseGS25, FroeseGHS25, FroeseGHS26, Steinberg05, BhaskaraV11, BhattiproluGGLT23, GuthMU25}.

\subsection{Results of This Paper}
\label{sec:results}

We present approximation and hardness results for the 
zonotope as well as for the  general containment problem. In passing, we conclude optimality and  impossibility of two well known schemes to approximate a convex body by a polytope respectively and we show that $Δ$-modular zonotopes have linear-size sparsifiers whenever $\Delta$ is a constant. %
The \emph{computational model} that we use is the classical oracle model for accessing convex bodies, i.e., closed convex sets with non-empty interior \citep{GrotschelLS84}. In particular, we assume the bodies are centered. 
Moreover, since we need to sample from the inner body $K$ efficiently, we also assume $K$ is well-rounded, i.e., $r \B_2^d \subseteq K \subseteq R \B_2^d$, where $0 < r \le R$ (with $R \in \poly(d) \cdot r$), and $\B_2^d$ is the $d$-dimensional $\ell_2$-ball.  Our algorithms operate in the \emph{membership} oracle which answers \textsc{Yes} or \textsc{No} to the query ``$x \in K$'' for a point $x \in \R^d$.  

\paragraph{Zonotope Containment.}  %
We provide  a  randomized polynomial-time algorithm that decides the {\sc Gap-Containment} problem for  $(Z,Q)$ with a factor of $s(d) = O(\sqrt d)$, where $Z⊆ ℝ^d$ is a zonotope, (\Cref{thm:zono-containment_hypercube_algo}).
The algorithm is based on sampling. If $Z$ is generated by the matrix $W ∈ ℝ^{d ×n}$ and if $Z \nsubseteq Q$ holds, we show that the probability of the event
\begin{displaymath}
  2 \sqrt{\frac{n}{\log n}} ⋅W y  ∉ Q,
\end{displaymath}
is at least  $\nicefrac{1}{\poly(n)}$ if $ y ∈ \{ \pm 1\}^n$  is uniformly chosen at random. 
This, together with a fundamental result of \citet{Talagrand90} and its algorithmic version by \citet{CohenP15} concerning  \emph{ zonotope sparsification} show the result. 
If  zonotopes can be sparsified up to constant factors using only $\Theta(d)$ many generators (see Talagrand conjecture: \Cref{conj:zonoid-sparsification}), then our technique yields $\Theta(\sqrt{\nicefrac{d}{\log d}})$-approximation for the zonotope containment problem. This is would be optimal as \citet{khot2007linear} provided  a corresponding lower bound that holds for the special case in which $Z$ is the hypercube.

\paragraph{\texorpdfstring{$\Delta$}{Delta}-Modular Zonotopes.}

A zonotope $Z = \{ W x ： x ∈ [-1,1]^n \}$ is \emph{$Δ$-modular}, if $W ∈ ℝ^{d ×n}$
has full row rank and the
determinant of each non-singular  $d \times d$ sub-matrix $B$ of $W$  satisfies 
$  1 ≤ |\det(B)| ≤ \Delta$. This concept is a generalization of the fundamental notion of \emph{total unimodularity}, see, e.g.~\citep{schrijver1998theory}. This generalization has received a lot of attention in the field of  integer programming, see, e.g.  \citep{ ArtmannWZ17, LeePSX23, PaatSWX24, FioriniJWY25}. Whether integer programming can be solved in polynomial time if $\Delta$ is a \emph{constant} is a prominent mystery.   
Our main result in this context is a proof of the  \emph{Talagrand conjecture},  if $Δ$ is a constant. 
This result is proved by establishing a connection between the spectral sparsification of \citet{BatsonSS14} and the facet structure of $Δ$-modular zonotopes and it implies   an optimal  $O(\Delta  \sqrt{\nicefrac{d}{\log d}})$ algorithm for the containment problem in that case. %

 \paragraph{Universal Hardness for Zonotopes.}
 \citet{khot2007linear} have shown a lower bound of  $s(d) = \Omega(\sqrt{\nicefrac{d}{\log d}})$  for a randomized polynomial-time algorithm that decides {\sc Gap-Containment}, where the inner body is the hypercube. 
 We show a \emph{universal} lower bound of $s(d) = \Omega(\sqrt{\nicefrac{d}{\log d}})$ for the {\sc Gap-Containment} problem, where the inner body is \emph{any} 
 zonotope. %
 To achieve this result, we exploit the structural and volumetric properties of \emph{normalized} zonotopes introduced in \citep{bozzai2023vector}, and a connection between containment and polytopal approximation of convex bodies. 

 \paragraph{General Convex Bodies and Polytope Approximation.}
 \citet{Naszodi2019approximating}  provided a sampling based algorithm to  approximate any convex body $K$ with  a polytope $P$ spanned by a polynomial number of vertices, such that, if $P$ is scaled by $O(\nicefrac{d}{\log d})$, then it contains $K$ (see also~\citep{mustafa2022sampling}). One can further note that, for $s(d) = O(\nicefrac{d}{\log d})$, {\sc Gap-Containment}  is a {\sc Yes}-instance  if all vertices of $P$ are in $Q$, otherwise it is a {\sc No}-instance. Since sampling from a convex body can be done in polynomial time~\citep{DyerFK91,LovaszS93,CousinsV18}, then for $s(d) = O(\nicefrac{d}{\log d})$, {\sc Gap-Containment} can also be solved in randomized polynomial-time.
 A recent result by \citet{huang2026hardness} shows that this $O(\nicefrac{d}{\log d})$ factor is optimal for general (non-symmetric) convex bodies. This is achieved by constructing a specific convex body such that no polytope with a polynomial number of vertices can improve upon the $O(\nicefrac{d}{\log d})$ factor.  We show  a matching lower bound of  $s(d) = \Omega(\nicefrac{d}{\log d})$ for any polynomial-time algorithm for {\sc Gap-Containment} that holds even for \emph{symmetric} convex bodies. Consequently, Naszódi's method is an optimal algorithm for containment. 
 For symmetric convex bodies $K ⊆ℝ^d$,
 \citet{Barvinok14} showed that there exists a polytope $P ⊆ K$ with a polynomial number of vertices such that $sP ⊇ K $  with  $s = O(\sqrt{d})$. Our lower bound proves that this approximation cannot be computed in polynomial time, if $K$ is given by a membership oracle.

\subsection{Related Work}

The zonotope containment problem --- the question of whether one zonotope is fully contained in a second zonotope --- arises naturally in the context of complexity theory \citep{Steinberg05, BhaskaraV11, BhattiproluGGLT23, GuthMU25}, control theory \citep{althoff2016combining, kulmburg2021co, KulmburgSA25}, and neural network verification \citep{FroeseGS25, FroeseGHS25, FroeseGHS26}. 
Despite its significance, the algorithmic literature on zonotope containment remains sparse. To the best of our knowledge, most existing algorithms focus on exact containment and, thus, require exponential time relative to the number of generators, or some other restrictive structural assumption \citep{SadraddiniT19, kulmburg2024search}. Furthermore, the scenario where a zonotope is contained within a convex body accessible only via an oracle remains largely unexplored, unlike the computation of inradius, circumradius, diameter, volume and other geometric functionals \citep{brieden1998approximation, DyerFK91, LovaszS90, LovaszS93, KannanLS97, LovaszV06, CousinsV18}.

\paragraph{Matrix Norms.} 
Given two convex symmetric bodies $K$ and $Q$,  the $(K,Q) \textsc{-Opt-Containment}$ problem can be interpreted via a polar view: given two norms $\| \cdot \|_K$ and $\| \cdot \|_Q$, induced by convex bodies $K,Q \subseteq \R^d$, we seek to find the minimal scalar $s>0$ so that $\|x\|_Q \leq s \|x\|_K$ for all $x \in \R^d$. That is, 
\[
    \forall x \in K: ~\|x\|_Q \leq s \|x\|_K \iff \forall x \in K: ~\|x\|_Q \leq \|x\|_{K/s} \iff \frac{1}{s}K \subseteq Q.
\]
Given the connection between approximating norms and the containment problem, we highlight that, when $K = A\B^n_q$ and $Q = \B^d_p$, $\max\{s>0: sK \subseteq Q\} = \nicefrac{1}{\|A\|_{q\to p}}$, where the $q\to p$ norm of a matrix $A \in \R^{d\times n}$ is defined by 
\[
    \|A\|_{q \to p} = \max_{x \in \R^n \setminus \{0\}} \frac{\|Ax\|_p}{\|x\|_q}.
\]
Hence, approximating $(K,Q) \textsc{-Opt-Containment}$ for the specific bodies $K = A\B^n_q$ and $Q = \B^d_p$ is equivalent to approximating the $q \to p$ norm of the matrix $A$. 
It is a well-known fact that, for all $p \ge 2$, $\ell_p$ balls $\B^d_p$ are \emph{zonoids} \citep[Theorem 6.6]{Bolker69}, where zonoids are centrally symmetric convex bodies that arise as limits of zonotopes \citep{BourgainLM89}.\footnote{The requirement of $p \ge 2$ is necessary as, for instance, $\B_1^d$ does not admit such a representation unless $d=2$.} Thus,
the problem of zonoid containment entails approximating $q\to p$ norms as a special case as long as $p,q\ge 2$. We also stress that, by \citep[Theorem 4.12]{BhattiproluGGLT23} (originally \citep[Theorem 5]{schechtman2006two}), which states that any $\ell_p$-ball can be approximated in polynomial time by a zonotope with $O(n^3)$ many generators, approximating $q \to p$ norms can be phrased as a zonotope containment problem. 

The problem of computing $q \to p$ matrix norms in particular has been extensively studied: from a hardness perspective, \citet{BhaskaraV11} show that the problem of maximizing $p$ matrix norms is inapproximable within a factor of $O(2^{{(\log d)}^{1-\varepsilon}})$ for every $\varepsilon \in (0,1)$ unless $\textup{NP} \subseteq \textup{DTIME}(2^{\poly(\log d)})$, and \citet{BhattiproluGGLT23} identify conditions on $p,q$ for which this or other hardness of approximation results are possible. On the positive side, \citep{Steinberg05} also shows that there exists a polynomial time algorithm achieving an approximation factor of $\max\{d,n\}^{\nicefrac{25}{128}}$, later improved to $\max\{d,n\}^{3-2\sqrt{2}}$ by \citet{GuthMU25}. Note that this can be improved to $\Theta\left((d\log d)^{3-2\sqrt{2}}\right)$ by \Cref{thm:zonoid-sparsification}.

\paragraph{Longest Vector-Sum Problem.} In the case where $K=Z(W)$ is a zonotope generated by matrix $W$ and $Q$ is any symmetric convex body, we recover the longest vector sum problem \citep{shenmaier2018approximability}. This problem aims to find a subset $S$ of vectors $V=\{v_1,\dots v_n\} \subseteq \R^d$ such that $\|\sum_{v \in S}v\|_Q$ is maximized. The fact that the two problems are equivalent can be found in \citep[Lemma 1]{shenmaier2020complexity}. Due to the close relationship to the $(q\to p)$-norm problem, the lower bounds for the longest vector sum problem are essentially the same. Similarly, if $Q$ is an $\ell_p$-ball, approximation guarantees from the $(\infty \to p)$-norm problem carry over to this problem. The algorithms introduced in \citep{shenmaier2018approximability, shenmaier2020complexity} are almost exclusively exponential time algorithms. Indeed, the most recent result showcases a $(1-\varepsilon)$-approximation in time $O(d^{ O(1)}(1+\nicefrac{2}{\varepsilon })^d n)$. To the best of our knowledge, our hypercube sampling algorithm (\Cref{algo:sqrt_n_log_n_proj_cube}) is the first polynomial time approximation algorithm for the general longest vector sum problem.

\section{Containment of a Zonotope in a Convex Body}\label{sec:cube-sampling}

In this section we prove our first main result  on the containment problem where the inner body $Z⊆ ℝ^d$ is a zonotope. We show that we can efficiently identify  a point in $Θ(\sqrt{d} ⋅Z)$ that is not contained in the outer body $Q ⊆ ℝ^d$ in the case where $Z$ itself is not contained in $ Q$ already. We do this by first looking at the special case where $Z$ is the hypercube, then generalize to zonotopes.

\subsection{Hypercube Containment}\label{sec:special_case_hypercube}

\citet{khot2007linear} have  provided tight bounds on the containment problem for the special case of the hypercube  $\B_\infty^d = \{ x ∈ ℝ^n ： \|x\|_∞ ≤1 \}$. The authors show that one cannot identify a point in $o (\sqrt{\nicefrac{d}{\log d}} ⋅Z) \setminus Q$ with a (randomized) polynomial number of queries to the membership oracle, describing $Q$. The authors also present a matching upper bound in that setting. Their result is phrased in the \emph{dual setting} of approximating the $ℓ_1$-diameter of the \emph{polar} $Q^\circ$ of $Q$.
Central to the result of \citet{khot2007linear} is the following lemma related to anti-concentration of measure in the hypercube. Recall that a hyperplane $H = \{ x ∈ ℝ^d ： a^\top  x = β \}$ is supporting for $Z⊆ ℝ^d$  if $β = \max \{ a^\top  x ： x ∈Z\}$. In particular, for the hypercube that is $\beta = \|a\|_1$.

\begin{restatable}[\citet{khot2007linear}]{lemma}{lemhypercubeanticonc}\label{lem:poly_vertex_cube_new}
      Let $H = \{ x ∈ ℝ^d ： a^\top x = \|a\|_1\} $ be a supporting hyperplane of the $d$-dimensional cube $[-1,1]^d$. There exists a constant $C ∈ ℕ$  such that, at least  $ (\nicefrac{1}{d^C}) ⋅ 2^d $ vertices of $\B_\infty^d = [-1,1]^d$ satisfy the inequality
  \begin{displaymath}
    a^\top x ≥  \|a\|_1 \sqrt{\frac{\log d}{d}}. 
  \end{displaymath}
\end{restatable}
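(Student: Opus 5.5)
We may assume $a \neq 0$, and by the symmetry $x_i \mapsto -x_i$ of the cube we may assume $a \geq 0$. The plan is to show that a uniformly random vertex $x \in \{\pm 1\}^d$ satisfies $a^\top x \geq \|a\|_1 \sqrt{\log d / d}$ with probability at least $d^{-C}$; this is exactly the claimed count of vertices. The idea is to cut the coordinates into about $\log d$ blocks. Each block independently gets a constant-probability upward deviation of one standard deviation, and the $\ell_1$--$\ell_2$ inequality converts the standard deviations back into $\|a\|_1$.

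\textbf{Step 1 (one-block anti-concentration).} For a Rademacher sum $Y = \sum_{i \in B} a_i x_i$ with $\sigma^2 = \E Y^2 = \sum_{i\in B} a_i^2$, Khintchine's fourth-moment bound gives $\E Y^4 \leq 3\sigma^4$. Paley--Zygmund applied to $Y^2$ then yields $\Pr[Y^2 \geq \sigma^2/4] \geq (3/4)^2 \sigma^4 / \E Y^4 \geq 3/16$. Since $Y$ is symmetric, $\Pr[Y \geq \sigma/2] \geq 3/32 =: c_0$. This holds trivially if $\sigma = 0$, reading the event as $Y \geq 0$.

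\textbf{Step 2 (blocking).} Let $k = \lceil 16 \log d \rceil$ and assume $d$ is large enough that $k \leq d$. Partition $[d]$ arbitrarily into $k$ blocks $B_1,\dots,B_k$, each of size at most $\lceil d/k\rceil \leq 2d/k$. The block sums $Y_j = \sum_{i\in B_j} a_i x_i$ are independent. Hence, with probability at least $c_0^{k} \geq d^{-C}$ for a suitable constant $C$, we have $Y_j \geq \sigma_j/2$ for every $j$ simultaneously. On this event, Cauchy--Schwarz on each block gives $\sigma_j = \|a_{B_j}\|_2 \geq \|a_{B_j}\|_1/\sqrt{|B_j|} \geq \|a_{B_j}\|_1 \sqrt{k/(2d)}$. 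Therefore
\[
a^\top x = \sum_{j=1}^k Y_j \geq \tfrac12 \sqrt{\tfrac{k}{2d}} \sum_{j} \|a_{B_j}\|_1 \geq \tfrac12\sqrt{\tfrac{8\log d}{d}}\,\|a\|_1 \geq \|a\|_1\sqrt{\tfrac{\log d}{d}} .
\]
This follows from $k \geq 16\log d$.

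\textbf{Step 3 (small $d$).} For the finitely many $d$ with $k > d$, the vertex $x = \operatorname{sign}(a)$ alone achieves $a^\top x = \|a\|_1 \geq \|a\|_1\sqrt{\log d/d}$, since $\log d \leq d$. This vertex is a $2^{-d}$ fraction of all vertices, and $2^{-d}$ is bounded below by a constant for bounded $d$; enlarging $C$ absorbs these cases.

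The only genuinely delicate point is Step 1: one needs anti-concentration of a weighted Rademacher sum that is uniform in the weights, with no assumption that the weights are spread out. This is why I use the fourth-moment/Paley--Zygmund argument rather than a CLT, which would fail when a single $a_i$ dominates. Once that is in place, the blocking trades the Gaussian tail $e^{-\Theta(\log d)}$ for the product $c_0^{\Theta(\log d)}$ of constant block probabilities, and the rest is bookkeeping of constants.
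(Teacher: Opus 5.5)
Your proof is correct. It necessarily takes a different route from the paper, because the paper gives no proof of this lemma: it is quoted from Khot and Naor and used as a black box in \Cref{prop:zono-containment_hypercube_algo_event}.

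Your argument is a self-contained, elementary replacement.
\begin{itemize}
\item The fourth-moment identity $\mathbb{E}Y^4 = 3\sigma^4 - 2\sum_i a_i^4 \le 3\sigma^4$ together with Paley--Zygmund gives anti-concentration at one standard deviation, uniformly in the weights.
\item Splitting into $\Theta(\log d)$ independent blocks turns a deviation of $\sqrt{\log d}$ standard deviations into a product of constant probabilities.
\item Cauchy--Schwarz on blocks of size $O(d/\log d)$ converts $\sum_j \sigma_j$ back into $\|a\|_1$.
\end{itemize}
A concrete benefit is that you obtain exactly the form the paper uses: constant $1$ in front of $\|a\|_1\sqrt{\log d/d}$, and an explicit (if large) exponent $C$. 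That is all \Cref{thm:zono-containment_hypercube_algo} needs.

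For comparison, the standard off-the-shelf route is Montgomery-Smith's lower tail estimate for Rademacher sums:
\[
\mathbb{P}\Big[\textstyle\sum_i a_i x_i \ge c^{-1}K_{1,2}(a,t)\Big] \ge c^{-1}e^{-ct^2},
\qquad K_{1,2}(a,t)=\inf\{\|a'\|_1+t\|a''\|_2 : a'+a''=a\}.
\]
For $t\le\sqrt d$ one has $K_{1,2}(a,t) \ge (t/\sqrt d)\|a\|_1$, so taking $t=\Theta(\sqrt{\log d})$ gives the lemma at once. Your blocking argument is in effect a hands-on proof of the weaker consequence of that estimate needed here.

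Two small remarks.
\begin{itemize}
\item After reducing to $a\ge 0$, take $x=\mathbf{1}$ in Step 3, since the sign vector of $a$ may have zero entries.
\item The lemma as literally stated fails for $d=1$. There $d^{-C}2^d=2$, but only one vertex satisfies $ax\ge 0$ when $a\neq 0$, and your Step 2 degenerates to $k=0$. So ``enlarging $C$'' only absorbs the finitely many cases $2\le d<k$, and the statement should be read for $d\ge 2$.
\end{itemize}
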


\smallskip
In our setting the lemma can then be used as follows. %
Suppose that $\B_\infty^d \nsubseteq Q$. The goal is to identify a point in the set 
\begin{displaymath}
  \sqrt{\frac{d}{\log d}} ⋅\B_\infty^d \setminus Q. 
\end{displaymath}
 Since $\B_\infty^d \nsubseteq Q$, there exists a supporting hyperplane $H = \{ x ∈ ℝ^d ： a^\top x = β\} $ of $\B_\infty^d$  such that $a^\top x ≤ β$ is valid for $Q$. Sample a vertex $x ∈ \{-1,1\}^d$ of $\B_\infty^d$ uniformly at random. The probability that
 $\sqrt{\nicefrac{d}{\log d}} \cdot x ∉ Q$ is at least the probability of $\sqrt{\nicefrac{d}{\log d}} \cdot x$ lying \emph{above} $H$. This is at least   $\nicefrac{1}{d^C}$. Thus, if  $\B_\infty^d \nsubseteq  Q$,  there is an efficient randomized algorithm that identifies a point in $\sqrt{\nicefrac{d}{\log d}} ⋅\B_\infty^d \setminus Q$.

\subsection{Hypercube Sampling for Zonotope Containment}
\label{sec:hyperc-sampl-zono}

We now come to the main result of this section, which is an efficient randomized algorithm that detects a point in $O(\sqrt{d}) ⋅ Z$, in the case where the zonotope $Z⊆ ℝ^d$ is not contained in   $Q$. Our approach combines the randomized sampling technique for the hypercube above with the celebrated result of \citet{Talagrand90}.

\begin{theorem}[Talagrand's zonotope sparsification \cite{Talagrand90}] \label{thm:zonoid-sparsification}
  Given $\varepsilon >0$ and $V ∈ ℝ^{ d × k}$, there exists a matrix $ W \in \R^{d \times n}$  such that the following holds for the zonotopes $Z =\{ Vx ： x ∈ [-1,1]^k\}\subseteq \R^d$ and  $Z^\prime = \{ Wx ： x ∈ [-1,1]^n\}$. 
  \begin{enumerate}
    \item  The number $n$ of generators of $Z'$  is bounded by  $n \in O(\nicefrac{d\log(d/\varepsilon)}{\varepsilon^2})$, and \label{item:1}
    \item $(1-\varepsilon)Z \subseteq Z^\prime  \subseteq (1+\varepsilon)Z$.    \label{item:2}
  \end{enumerate}    
\end{theorem}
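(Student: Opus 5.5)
The plan is to pass to the dual (norm) formulation and sparsify by random sampling. For $\theta\in S^{d-1}$ the support function is
\[
h_Z(\theta)=\sum_{i=1}^k |\langle v_i,\theta\rangle| ,
\]
so $(1-\varepsilon)Z\subseteq Z'\subseteq(1+\varepsilon)Z$ is equivalent to $|h_{Z'}(\theta)-h_Z(\theta)|\le\varepsilon\, h_Z(\theta)$ for all $\theta$. The function $h_Z$ is the $\ell_1$-norm of $V^\top\theta$ restricted to the $d$-dimensional subspace $E=\operatorname{im}(V^\top)\subseteq\R^k$. The statement therefore becomes the classical problem of embedding a $d$-dimensional subspace of $\ell_1^k$ into $\ell_1^n$ with distortion $1\pm\varepsilon$ by reweighting and subsampling coordinates. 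Concretely, I will pick a probability distribution $p$ on $[k]$, draw indices $i_1,\dots,i_n$ i.i.d.\ from $p$, and set $w_j=v_{i_j}/(n\,p_{i_j})$. Then $\mathbb{E}\, h_{Z'}(\theta)=h_Z(\theta)$ for every $\theta$, and it remains to prove uniform concentration over $E$.

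The sampling distribution is chosen via Lewis weights for $\ell_1$. By Lewis' theorem there is a change of density, i.e.\ weights $\lambda_i>0$ with $\sum_i\lambda_i=d$, such that after rescaling each row, every coordinate functional on $E$ has "$\ell_2$-leverage" exactly $\lambda_i$. I set $p_i=\lambda_i/d$. This guarantees $|\langle v_i,\theta\rangle|/p_i\le d\cdot h_Z(\theta)$, up to the change of density. It also makes the $\ell_1$ norm on $E$ comparable to a Euclidean structure in which each coordinate functional has controlled $\ell_2$ size.

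The concentration step is the main obstacle. A naive Chernoff bound plus a union bound over an $\varepsilon$-net of the unit sphere of $(E,\|\cdot\|_1)$ yields only $n=O(d^2\log(1/\varepsilon)/\varepsilon^2)$, because the individual summands can be as large as $d\,h_Z(\theta)$. To reach $d\log(d/\varepsilon)/\varepsilon^2$, I will follow Talagrand's route.
\begin{enumerate}
\item Symmetrize: bound $\mathbb{E}\sup_{\theta}|h_{Z'}(\theta)-h_Z(\theta)|/h_Z(\theta)$ by a Rademacher process $\mathbb{E}\sup_\theta\bigl|\sum_j \epsilon_j |\langle w_j,\theta\rangle|\bigr|$.
\item Apply the contraction principle to remove the absolute values.
\item Bound the resulting Gaussian/Rademacher supremum by Dudley's entropy integral with respect to the random $\ell_\infty$-type metric $\max_j|\langle w_j,\theta\rangle|$.
\item Estimate the covering numbers of the $\ell_1$-unit ball of $E$ in this metric by dual Sudakov together with Maurey's empirical method. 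Lewis weights are exactly what give $\log N(\delta)\lesssim d\log(k)/\delta^2$-type bounds that integrate to a $\sqrt{d\log d/n}$ bound, at least once $k$ has first been reduced to $\mathrm{poly}(d)$ by the naive step.
\end{enumerate}
This gives expected deviation at most $C\sqrt{d\log(d/\varepsilon)/n}\le\varepsilon$ for $n$ as claimed. A bootstrapping or iteration argument (halving $n$ repeatedly, each halving with error $\varepsilon_t$ forming a geometric series) removes the dependence on $k$.

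Finally, the existence of a good sample follows because the expected deviation is below $\varepsilon$. Thus some realization $W$ satisfies items \ref{item:1} and \ref{item:2}. The constants in $(1\pm\varepsilon)$ are absorbed by rescaling $\varepsilon$ by a constant factor.
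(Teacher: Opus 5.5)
The paper gives no proof of this statement. It quotes it from \citet{Talagrand90}, with the algorithmic version in \citet{CohenP15}, and your outline follows the same route: Lewis-weight importance sampling, symmetrization, contraction, and a preliminary reduction of $k$. Several of your steps are correct: the reformulation via $h_Z$, the unbiased estimator with $p_i=\lambda_i/d$, the bound $|\langle v_i,\theta\rangle|/p_i\le d\,h_Z(\theta)$, and the naive reduction to $O(d^2\log(1/\varepsilon)/\varepsilon^2)$ generators. The gap is in steps 3--4, which carry the whole difficulty.

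\textbf{Wrong metric for Dudley.} A Rademacher process $\theta\mapsto\sum_j\epsilon_j\langle w_j,\theta\rangle$ is subgaussian for the $\ell_2$ metric $\bigl(\sum_j\langle w_j,\theta-\theta'\rangle^2\bigr)^{1/2}$, not for $\max_j|\langle w_j,\theta-\theta'\rangle|$. Running Dudley in the $\ell_\infty$ metric costs a factor $\sqrt n$. Since $|\langle w_j,\theta\rangle|$ can be as large as $\frac dn h_Z(\theta)$, this gives a bound of order $d/\sqrt n$, so you are back to $n\gtrsim d^2/\varepsilon^2$. The correct interpolation $\sum_j t_j^2\le\max_j|t_j|\cdot\sum_j|t_j|$ brings in $h_{Z'}$ itself. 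It then requires a self-bounding argument that you do not mention.

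\textbf{Dudley loses a logarithm.} More seriously, an entropy bound $\log N(\delta)\lesssim A/\delta^2$ does not integrate to $\sqrt A$. Dudley's integral $\int\sqrt A\,\delta^{-1}\,d\delta$ diverges logarithmically, and after the volumetric cutoff it gives $\sqrt A\cdot\log d$. This loss is intrinsic to Dudley. For $Z=[-1,1]^d$ the process is essentially $\sqrt{d/n}\,\max_{i\le d}|g_i|$, of order $\sqrt{d\log d/n}$. Dudley's integral for $\B_1^d$ in $\ell_2$, however, is of order $(\log d)^{3/2}$. So steps 3--4, even when repaired, lose an extra $\log d$ in the deviation, i.e.\ $\log^2 d$ in $n$. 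They do not give the claimed $n=O(d\log(d/\varepsilon)/\varepsilon^2)$.

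\textbf{Missing idea: Lewis position removes the need for chaining.} After a linear change of coordinates, write $v_i=\lambda_iu_i$ with $\|u_i\|_2=1$ and $\sum_i\lambda_iu_iu_i^\top=I_d$.
\begin{enumerate}
\item Every $y\in\R^d$ equals $\sum_i\langle u_i,y\rangle v_i$. Hence its gauge satisfies $\|y\|_Z\le\max_{i\le k}|\langle u_i,y\rangle|$.
\item After symmetrization and contraction you must bound $\mathbb{E}\sup_{h_Z(\theta)\le1}|\sum_j\epsilon_j\langle w_j,\theta\rangle|=\mathbb{E}\|\sum_j\epsilon_jw_j\|_Z$.
\item With $w_j=\frac dn u_{i_j}$, each $\langle u_i,\sum_j\epsilon_jw_j\rangle$ is subgaussian. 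Its variance proxy is at most $\|\sum_jw_jw_j^\top\|_{\mathrm{op}}=\frac dn\|\frac dn\sum_ju_{i_j}u_{i_j}^\top\|_{\mathrm{op}}$.
\item This is $O(d/n)$ in expectation by matrix Chernoff once $n\gtrsim d\log d$, because the matrix has mean $I_d$ and summands of norm $d/n$.
\item A maximum over $k$ indices then gives expected relative deviation $O(\sqrt{d\log k/n})$.
\end{enumerate}
Apply this to the output of your naive step, with recomputed Lewis weights. That yields $n=O(d\log(d/\varepsilon)/\varepsilon^2)$.
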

\noindent 
The matrix  $W$  can be efficiently computed with the randomized algorithm of \citet{CohenP15}. 
It is a long-standing open problem to understand whether, for every zonotope, one could further reduce the number of generators of the sparsified zonotope $Z^\prime$ to $O(\nicefrac{d}{\varepsilon^2})$.
\begin{conjecture}[Section 1 in \citep{bozzai2023vector}, Chapter 11 in \citep{Rothvoss21}] \label{conj:zonoid-sparsification}
  Given $\varepsilon >0$ and $V ∈ ℝ^{ d × k}$, there exists a matrix $ W \in \R^{d \times n}$  such that the following holds for the zonotopes $Z =\{ Vx ： x ∈ [-1,1]^k\}\subseteq \R^d$ and  $Z^\prime = \{ Wx ： x ∈ [-1,1]^n\}$. 
  \begin{enumerate}
    \item  The number $n$ of generators of $Z'$  is bounded by  $n \in O(\nicefrac{d}{\varepsilon^2})$, and \label{item:1-conj}
    \item $(1-\varepsilon)Z \subseteq Z^\prime  \subseteq (1+\varepsilon)Z$.    \label{item:2-conj}
  \end{enumerate}    
\end{conjecture}

\paragraph{Outline of Algorithm and Analysis.} Our algorithm is structured as follows: in the first step of the algorithm we apply Talagrand's sparsification  %
to approximate the zonotope $Z  ⊆ ℝ^d$  by a zonotope $Z' =\{ W x ： x ∈ [-1,1]^n\}  ⊆ ℝ^d$ with $W ∈ ℝ^{d × n}$ and $n = O(d\log d)$ such that
\begin{displaymath}
   Z' ⊆ Z ⊆ 2 Z'.
 \end{displaymath}
 Then, we sample a vertex of the hypercube $y ∈ \{-1,1\}^n$ uniformly at random and test for the scaled image of $y$: 
 \begin{align}\label{eq:event-E}
     x = \sqrt{\frac{n}{\log n}} ⋅ W y \notin Q. \tag{Event $\cE$}
 \end{align}
We show in \Cref{prop:zono-containment_hypercube_algo_event} that, if $Z \nsubseteq Q$, then event $\cE$ happens with probability at least $\nicefrac{1}{n^C}$ for some constant $C ∈ ℕ$. This has the following consequences:
 \begin{enumerate}
 \item Talagrand's sparsification result \cref{thm:zonoid-sparsification} guarantees $n = O(d \log d)$, this shows that, with probability of at least $\nicefrac{1}{d^{2C}}$, we have identified a point in
   $O(\sqrt{d}) ⋅Z$ that is not in $ Q $. %
 \item Furthermore,  if \cref{conj:zonoid-sparsification} holds true and the generators can be found in polynomial time, then $n = O(d)$ and the scaling factor becomes $\Theta(\sqrt{\nicefrac{d}{\log d}})$. This will be used in \cref{sec:bdd-determinants} for the special case of graphical and $\Delta$-modular zonotopes where we can prove such a statement. Note that this factor is tight for zonotopes by the result of \citet{khot2007linear}. 
 \end{enumerate}

\begin{proposition}\label{prop:zono-containment_hypercube_algo_event}
  There exists a constant $C ∈ℕ$ such that the probability of \eqref{eq:event-E} is at least $\nicefrac{1}{n^C}$. 
\end{proposition}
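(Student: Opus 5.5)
Under the standing assumption of the algorithm that $Z\nsubseteq Q$, we reduce to the hypercube case via the linear map $W$, and then apply \Cref{lem:poly_vertex_cube_new} in dimension $n$. The sparsified zonotope satisfies $Z' \subseteq Z \subseteq 2Z'$. In the reduction step, $Z\nsubseteq Q$ by itself does not force $Z'\nsubseteq Q$. However, $Z\subseteq 2Z'$ gives $2Z'\nsubseteq Q$, i.e.\ $\frac12 Z \nsubseteq Q$ fails to be excluded only up to the factor $2$. We therefore work with $2Z'$, and the factor $2$ is absorbed into the constant in the scaling $\sqrt{n/\log n}$. (Equivalently, we may assume $Z' \nsubseteq Q$ after replacing $Q$ by $Q/2$; this is the reading in which event $\cE$ certifies a point in $O(\sqrt d)\cdot Z\setminus Q$.) Concretely, we prove the following: if $Z'=W[-1,1]^n\nsubseteq Q$, then $\Pr_y[\sqrt{n/\log n}\, Wy\notin Q]\ge n^{-C}$.

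Since $Q$ is closed and convex with $Z'\nsubseteq Q$, and $Z'$ is the convex hull of the points $Wy$ with $y\in\{\pm1\}^n$, some point of $Z'$ lies outside $Q$. By separation there is $c\in\R^d$ such that $c^\top x\le \gamma$ for all $x\in Q$, while $\max_{z\in Z'} c^\top z > \gamma$. Because $Q$ is centered, $\gamma>0$. Writing $a = W^\top c\in\R^n$, we get
\[
\max_{z\in Z'} c^\top z=\max_{y\in[-1,1]^n} a^\top y=\|a\|_1 > \gamma .
\]
This pulls back the separating hyperplane to the hyperplane $\{y : a^\top y=\|a\|_1\}$, which supports $[-1,1]^n$; this is precisely the setting of \Cref{lem:poly_vertex_cube_new}.

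Next we apply \Cref{lem:poly_vertex_cube_new} in dimension $n$. It gives at least $2^n/n^C$ vertices $y\in\{\pm1\}^n$ with $a^\top y\ge \|a\|_1\sqrt{\log n/n}$. For each such $y$,
\[
c^\top\Bigl(\sqrt{\tfrac{n}{\log n}}\,Wy\Bigr)=\sqrt{\tfrac{n}{\log n}}\;a^\top y\ \ge\ \|a\|_1 > \gamma ,
\]
so the point $\sqrt{n/\log n}\,Wy$ violates a valid inequality of $Q$ and hence lies outside $Q$. A uniformly random $y$ is such a vertex with probability at least $n^{-C}$, which yields the claimed bound.

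The main obstacle is the reduction step: aligning the hypothesis $Z\nsubseteq Q$ with $Z'\nsubseteq Q$. Since $Z'\subseteq Z$, the implication only goes through up to the factor $2$, so one must either scale the event by $2$ or state the proposition under $Z'\nsubseteq Q$. The probabilistic content is entirely contained in the cited lemma, which we use as a black box.
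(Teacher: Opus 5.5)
Your proof is correct and is essentially the paper's argument. You separate a point of the zonotope from $Q$, pull the separating hyperplane back to the supporting hyperplane $\{y : a^\top y = \|a\|_1\}$ of $[-1,1]^n$ with $a = W^\top c$, and apply \Cref{lem:poly_vertex_cube_new} in dimension $n$; the paper instead separates $y^\star$ from $f_W^{-1}(Q)$ directly in $\R^n$, which amounts to the same thing.

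Your handling of $Z' \subseteq Z \subseteq 2Z'$, running the argument with $\frac12 Q$, is more careful than the paper's proof, which tacitly treats $W$ as a generator matrix of $Z$ itself. It also recovers exactly the factor $2$ used in \Cref{algo:sqrt_n_log_n_proj_cube}. One small correction to your aside: $Z \nsubseteq Q$ gives $Z' \nsubseteq \frac12 Q$, not ``$\frac12 Z \nsubseteq Q$''.
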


\begin{proof}
 Let $x^\star \in Z \setminus Q$ and assume without loss of generality that $x^\star$ is a vertex of $Z$. Then, there exists a point $y^\star$ in $\{\pm1\}^n$ such that $Wy^\star = x^\star$ and $y^\star \notin f_W^{-1}(Q)$ where $f_W^{-1}$ denotes the preimage of $Q$ under the linear transformation $x \mapsto Wx$. Since $f_W$ is continuous and $Q$ is a convex body, its preimage $f^{-1}_W(Q) \subseteq \R^n$ is a closed convex set and hence, there exists a separating hyperplane $H$ such that $y^\star \in H^+$ and $f^{-1}_W(Q) \subseteq H^-$. 
    By \Cref{lem:poly_vertex_cube_new}, we know there exists a constant $C \in \N$ such that after scaling the cube $[-1,1]^n$ by a factor of $\sqrt{\nicefrac{n}{\log n}}$, at least a $\nicefrac{1}{n^C}$ fraction of its vertices will lie in $H^+$. Therefore, sampling $y \in \{\pm1\}^n$ uniformly at random yields
    \begin{align*}
        \P{}{\sqrt{\frac{n}{\log n}} \cdot Wy \notin Q} = \P{}{\sqrt{\frac{n}{\log n}}\cdot y \notin f_W^{-1}(Q)} \geq \P{}{\sqrt{\frac{n}{\log n}} \cdot y \in H^+} \geq \frac{1}{n^C}.
    \end{align*}
    Substituting the value of $n= O(d\log d)$, sampling $T \ge d^{2C}$ many points is enough to detect a point $x \in \left(O(\sqrt{d})\cdot Z\right) \setminus Q$ with high probability. This concludes the proof.    
\end{proof}

\begin{figure}
    \centering
    \hspace*{2cm}
    \includegraphics[width=0.7\linewidth]{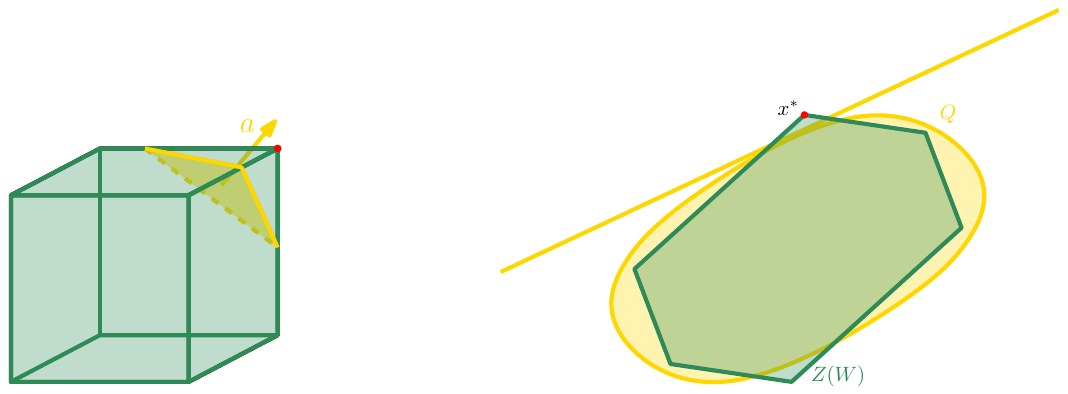}
    \caption{The main idea behind the algorithm. %
      A vertex $x^\star$ of the zonotope $Z(W)$ outside of $Q$ has a preimage $f_W^{-1}(x^\star)$ that can be separated from the preimage $f_W^{-1}(Q)$ by a hyperplane defined by some unit vector $a \in \S^{n-1}$.}
    \label{fig:Hypercube_sampling_figure}
\end{figure}

We stress that, reducing the number of generators using \cref{thm:zonoid-sparsification} is necessary for the performance of the algorithm to be good. In fact, the following example shows that the probability bound depends directly on the number of generators which implies that any further improvement of the algorithm requires a better sparsification.
Indeed, consider the following example: 
    let $n$ be a integer multiple of $d$ and consider the following set of generators $W = \nicefrac{d}{n}(e_1,\dots e_1, e_2,\dots,e_2, \dots e_n, \dots ,e_n)$ where each standard unit vector $e_i$ is repeated $\nicefrac{n}{d}$ times. Notice that $Z(W) = [-1,1]^d$ and consider the supporting hyperplane $H= \{x \in \R^d: \ind{}^\top x = \|\ind{}\|_1\}$. Using Hoeffding's inequality we can find an upper bound on the probability that a point sampled from the $n$-dimensional hypercube is mapped above the scaled hyperplane:
    \begin{align*}
        \P{x \sim \{-1,1\}^n}{\scalar{Wx, \ind{}} \geq \frac{d}{s}} = \P{}{x^\top W^\top \ind{} \geq \frac{d}{s}} = \P{}{\frac{d}{n}\sum_{i=1}^nx_i \geq \frac{d}{s}} \leq \exp{-\frac{n}{2s^2}}.
    \end{align*}
    Hence, for every scaling factor $s= o(\sqrt{\nicefrac{n}{\log n}})$ this probability is exponentially small. This means that, unless the number of generators $n$ is sparsified to $O(d)$, this algorithm yields suboptimal bounds even for the $d$-dimensional hypercube with artificially split generators.

With all of the above, we can now state the full algorithm and the corresponding statement on its approximation guarantee: %

\begin{algorithm}
    \caption{Hypercube Sampling}\label{algo:sqrt_n_log_n_proj_cube}
    \begin{algorithmic}
        \State Approximate zonoid $Z$ by zonotope $Z'(W)$ \Comment{Talagrand's Sparsification}
        \For{$T$ steps}
            \State Sample $y \in \{\pm1\}^n$ uniformly at random
            \If{$x = 2\sqrt{\nicefrac{n}{\log n}}\cdot Wy \notin Q$ }
                \State \textbf{Return} $x$
            \EndIf
        \EndFor 
        \State \textbf{Return} \texttt{True}
    \end{algorithmic}
\end{algorithm}

\begin{theorem}\label{thm:zono-containment_hypercube_algo}
    Given a zonotope $Z \subseteq\R^d$ and a convex body $Q \subseteq \R^d$ such that $Z \nsubseteq Q$, there exists a constant $C \in \N$ such that \Cref{algo:sqrt_n_log_n_proj_cube} finds a point $x \in \left(O(\sqrt{d})\cdot Z\right) \setminus Q$ with high probability for $T \geq  d^{2C}$. Hence, for $s(d) = O(\sqrt{d})$, \cref{algo:sqrt_n_log_n_proj_cube} solves the $(Z,Q)$\textup{\textsc{-Gap-Containment}} problem   with high probability.
\end{theorem}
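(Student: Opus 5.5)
The plan is to combine \Cref{thm:zonoid-sparsification} with \Cref{prop:zono-containment_hypercube_algo_event}, and then amplify by independent repetition.

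\textbf{Step 1: sparsify.} Apply \Cref{thm:zonoid-sparsification} with $\varepsilon = 1/3$, computed via the randomized algorithm of \citet{CohenP15}. This yields $\tilde W$ with $n = O(d\log d)$ columns and $\tfrac23 Z \subseteq Z(\tilde W) \subseteq \tfrac43 Z$. Rescale to $W = \tfrac34\tilde W$, so that $Z' = Z(W)$ satisfies $\tfrac12 Z \subseteq Z' \subseteq Z$, i.e.\ $Z' \subseteq Z \subseteq 2Z'$. The failure probability of the sparsification algorithm can be made polynomially small and is added to the final error by a union bound.

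\textbf{Step 2: reduce to the proposition.} If $Z \nsubseteq Q$, then $2Z' \nsubseteq Q$. Hence the zonotope $Z(2W)$ is not contained in $Q$. Apply \Cref{prop:zono-containment_hypercube_algo_event} to the generator matrix $2W$, whose proof only uses some point of the zonotope outside $Q$. This gives a constant $C$ such that, for uniform $y\in\{\pm1\}^n$,
\[
\Pr\Bigl[\,2\sqrt{\tfrac{n}{\log n}}\, W y \notin Q\,\Bigr] \ge n^{-C}.
\]
This is exactly the test performed in \Cref{algo:sqrt_n_log_n_proj_cube}.

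\textbf{Step 3: amplify.} With $T$ independent samples, the probability that all of them fail is at most $(1-n^{-C})^T \le \exp(-T n^{-C})$. Since $n = O(d\log d) \le d^2$ for large $d$, we have $n^{C} \le d^{2C}$. Choosing $T \ge d^{2C}\cdot \omega(\log d)$, or $T\ge d^{2C'}$ for a slightly larger constant $C'$, makes this failure probability $o(1)$, and the bound is $e^{-\Omega(\log d)}$ for $T = d^{2C+1}$. Every returned point satisfies $x \notin Q$ by construction, which is checked with the membership oracle. It also satisfies
\[
x = 2\sqrt{\tfrac{n}{\log n}}\, W y \in 2\sqrt{\tfrac{n}{\log n}}\, Z' \subseteq 2\sqrt{\tfrac{n}{\log n}}\, Z,
\]
because $y\in[-1,1]^n$ and $Z'\subseteq Z$. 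Since $n/\log n = O(d\log d/\log d) = O(d)$, this gives $x \in O(\sqrt d)\cdot Z$.

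\textbf{Step 4: gap decision.} Set $s(d) = 2\sqrt{n/\log n} = O(\sqrt d)$. The algorithm answers \textsc{No} exactly when it returns a witness. If $Z\subseteq Q$, that is a \textsc{Yes} instance, then every queried point lies in $s(d) Z$. This alone does not imply it lies in $Q$, so in that case the algorithm may also return a point. The decision rule must therefore be phrased through the gap.

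Call the algorithm on the pair $(s(d)Z, Q)$ in the sense of the gap problem, i.e.\ run it on $Z$ with the sampled points scaled down by $s(d)$. Concretely, test $\sqrt{n/\log n}\,Wy/\sqrt{n/\log n}$ appropriately, so that every queried point lies in $Z$.
\begin{itemize}
\item In a \textsc{Yes} instance ($Z\subseteq Q$), no queried point lies outside $Q$, so the algorithm answers \textsc{Yes}.
\item In a \textsc{No} instance ($s(d)Z\nsubseteq Q$), Steps 2--3 applied to $s(d)^{-1}Q$ find a witness with high probability.
\end{itemize}

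\textbf{Main obstacle.} The main obstacle is this bookkeeping of scalings between the inner and outer body, together with making sure the constant $C$ from the proposition is uniform in $W$. It is uniform because \Cref{lem:poly_vertex_cube_new} depends only on the dimension $n$.
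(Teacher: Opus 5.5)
Your Steps 1--3 are correct and essentially follow the paper's argument: sparsify to $Z'\subseteq Z\subseteq 2Z'$ with $n=O(d\log d)$, apply \Cref{prop:zono-containment_hypercube_algo_event} to $2W$, and repeat $T$ times. The explicit rescaling $W=\tfrac34\tilde W$ is fine, and so is your remark that $T=d^{2C}$ only gives failure probability about $e^{-1}$ unless $C$ is enlarged. The gap is in Step 4. It misreads \textsc{Gap-Containment} and then replaces the algorithm by an incorrect one.

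The conditions ``$Z\subseteq Q$'' and ``$s(d)Z\nsubseteq Q$'' overlap, and on the overlap either answer is accepted. So you only need two things:
\begin{itemize}
\item[(a)] if $s(d)Z\subseteq Q$, the algorithm returns \texttt{True};
\item[(b)] if $Z\nsubseteq Q$, it returns a point with high probability.
\end{itemize}
Both hold for \Cref{algo:sqrt_n_log_n_proj_cube} unchanged. Every tested point $2\sqrt{n/\log n}\,Wy$ lies in $s(d)Z'\subseteq s(d)Z$, which gives (a), and (b) is your Steps 2--3. A returned point lies in $s(d)Z\setminus Q$ and so certifies the \textsc{No} condition. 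Hence your worry that the algorithm ``may also return a point'' on a \textsc{Yes} instance is harmless.

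Your modification, by contrast, breaks (b). Testing $Wy\notin Q$, with all queries in $Z$, is the original algorithm run on $(Z,s(d)Q)$, not on $(Z,s(d)^{-1}Q)$. Steps 2--3 would then need $Z\nsubseteq s(d)Q$, which follows neither from $s(d)Z\nsubseteq Q$ nor from $Z\nsubseteq Q$. For instance, take $Z=[-1,1]^d$ and $Q=\{x\in 2\B_\infty^d : |\mathbf{1}^\top x|\le d-\tfrac12\}$. Then $Z\nsubseteq Q$, so \textsc{No} is the only correct answer. Yet $Z\setminus Q$ consists only of points within $\ell_1$-distance $\tfrac12$ of $\pm\mathbf{1}$. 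With generators $e_1,\dots,e_d$ (at any scaling $\lambda\le 1$), a query restricted to $Z$ lands there with probability at most $2^{1-d}$, so your test answers \textsc{Yes} with overwhelming probability. This is exactly why the algorithm must scale up by $\sqrt{n/\log n}$. Drop the modification and argue (a) and (b) directly.
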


\section{\texorpdfstring{$\Delta$}{Delta}-Modular Zonotopes}\label{sec:bdd-determinants}

In this section, we focus on zonotopes $Z = \{ W x ： x ∈ [-1,1]^n \}$ where $W$ has full row rank and the %
determinant of each non-singular  $d \times d$ sub-matrix $B$ of $W$  satisfies %
\begin{align}\label{eq:bdd-determinant}
  1 ≤ |\det(B)| ≤ \Delta. 
\end{align}
Such a matrix and the corresponding zonotope is called \emph{$\Delta$-modular}. 
This concept is a generalization of the fundamental notion of \emph{total unimodularity}, see, e.g.~\cite{schrijver1998theory}. Unimodular zonotopes have been studied from a more algebraic viewpoint for example in \cite{crowley2026graded, backman2019geometric}. In the slightly more restricted setting, in which the matrix has to be integral, $\Delta$-modularity is  studied in the context of integer programming \citep{Lee90, ArtmannWZ17, GlazerWZ18, LeePSX23, PaatSWX24, aprile2025integer, FioriniJWY25, DadushEPRS26}.

\subsection{Linear-size Sparsification} %
Our main result of this section is a proof of the Talagrand conjecture (\Cref{conj:zonoid-sparsification})  for \emph{$\Delta$-modular} zonotopes, in the case  where  $\Delta$ is a constant. By applying \cref{algo:sqrt_n_log_n_proj_cube} this implies the approximation guarantee of $O\left(\Delta^2 \cdot \sqrt{\nicefrac{d}{\log d}}\right)$ for zonotope containment in the oracle model, which is optimal in the case where $\Delta$ is a constant. Note that the lower bound deduced in \cref{sec:special_case_hypercube} from \cite{khot2007linear} still holds in this setting since the hypercube $\B_\infty^d$ is a 1-modular zonotope. 

The theorem below is stated in terms of the support function of a zonotope $Z$  along direction $a$, which is given by $h_Z(a) = \|W^\top a\|_1$.

\begin{theorem}[Sparsification of $\Delta$-modular matrices]\label{thm:talagrand-TU-zonotopes}
   Let $W\in \R^{d\times n}$ be a $\Delta$-modular matrix. 
   For every $\varepsilon\in(0,1)$, there exists a matrix $W' ∈ ℝ^{d × n'}$ where 
   $n' = O(\nicefrac{d}{\varepsilon^2})$, %
   such that for all $x ∈ ℝ^d$, one has  
    \[
        (1-\varepsilon)^2 \|W^\top x\|_1 ≤   \| {W^\prime}^\top  x \|_1 ≤  \Delta^2(1+\varepsilon)^2 \|W^\top x\|_1.
    \]
    Moreover, the columns of  $W^\prime$ are a subset of positively scaled columns of $W$ and  $W^\prime$ can be found in polynomial time.    
\end{theorem}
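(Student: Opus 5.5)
The plan is to reduce the $\ell_1$-type comparison to a quadratic one on a finite set of directions, and then use the spectral sparsifier of Batson, Spielman and Srivastava. Two facts drive this. First, $h_{Z'}\le c\,h_Z$ for all $x$ is equivalent to $Z'\subseteq cZ$. Second, a containment $P\subseteq cQ$ between zonotopes only has to be checked on the facet normals of the \emph{outer} body. Both $Z=Z(W)$ and $Z'=Z(W')$ have facets parallel to hyperplanes spanned by $d-1$ linearly independent columns of $W$. This holds for $Z'$ because its columns are positively scaled columns of $W$. So it suffices to prove both inequalities for the finite set $\mathcal N$ of normals $a$ to hyperplanes $\mathrm{span}(w_{i_1},\dots,w_{i_{d-1}})$ with linearly independent $w_{i_j}$.

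\textbf{Step 1 (integrality of facet normals).} For such $d-1$ columns $B=(w_{i_1},\dots,w_{i_{d-1}})$, I take $a$ to be the cofactor vector, so that $a^\top v=\det[B,v]$ for every $v\in\R^d$. Then $w_i^\top a=\det[B,w_i]$ is either $0$ or a non-singular $d\times d$ subdeterminant of $W$. By $\Delta$-modularity, $|w_i^\top a|\in\{0\}\cup[1,\Delta]$ for every $i$. Hence, for each $i$,
\[
  \tfrac1\Delta (w_i^\top a)^2\;\le\;|w_i^\top a|\;\le\;(w_i^\top a)^2 .
\]
Consequently, for any nonnegative weights $s_i$, the weighted $\ell_1$ quantity $\sum_i s_i|w_i^\top a|$ and the quadratic form $\sum_i s_i (w_i^\top a)^2$ agree up to a factor $\Delta$ for all $a\in\mathcal N$. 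Since the directions are scale invariant, this normalization of $a$ is harmless.

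\textbf{Step 2 (spectral sparsification).} I apply BSS to the rank-one decomposition $WW^\top=\sum_i w_iw_i^\top$. This gives weights $s_i\ge 0$, at most $O(d/\varepsilon^2)$ of them nonzero, computable deterministically in polynomial time, with
\[
  (1-\varepsilon)WW^\top\preceq \textstyle\sum_i s_iw_iw_i^\top\preceq(1+\varepsilon)WW^\top .
\]
I then set $W'$ to have columns $s_iw_i$ for $s_i>0$, so that $\|W'^\top a\|_1=\sum_i s_i|w_i^\top a|$. For $a\in\mathcal N$, chaining Step 1 with the spectral bound gives
\[
  \|W'^\top a\|_1\le \sum_i s_i(w_i^\top a)^2\le(1+\varepsilon)\|W^\top a\|_2^2\le(1+\varepsilon)\Delta\|W^\top a\|_1 ,
\]
and symmetrically $\|W'^\top a\|_1\ge (1-\varepsilon)\Delta^{-1}\|W^\top a\|_1$. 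Applying the upper bound on the facet normals of $Z$ yields $Z'\subseteq(1+\varepsilon)\Delta Z$. Applying the lower bound on the facet normals of $Z'$, which also lie in $\mathcal N$, yields $(1-\varepsilon)\Delta^{-1}Z\subseteq Z'$. Translating back to support functions for all $x$ and rescaling $W'$ by $\Delta$ gives the claimed sandwich with factor $\Delta^2$. The $(1\pm\varepsilon)^2$ in the statement then leaves slack for any rescaling or approximate BSS computation.

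\textbf{Main obstacle.} The delicate point is the reduction to $\mathcal N$: the inequality must hold for all $x$, but the $\ell_1$–$\ell_2^2$ comparison is only valid on $\mathcal N$. I would handle it through the containment formulation $Z'\subseteq cZ$ iff every vertex of $Z'$ satisfies the facet inequalities of $Z$, and likewise in the other direction. For this it is essential that the sparsifier only reweights existing columns, so that no new facet directions appear. I would also check carefully that degenerate cases (zero weights, or fewer than $d-1$ independent columns surviving in $W'$) cannot break the full-dimensionality needed for this facet argument. The lower spectral bound guarantees that $W'$ still has full row rank, and this should settle those cases.
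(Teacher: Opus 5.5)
Your proposal is correct and follows the paper's proof. Your cofactor-normalized facet normal is exactly \Cref{lem:tu-technical} with $\alpha=1$ and $\beta=\Delta$, and the BSS chain combined with reducing to facet normals of the outer body (reweighting columns of $W$ creates no new facet directions) is the paper's argument; you are in fact more careful than the paper on two points it glosses over: its chain actually certifies the bound for $\Delta W'$ rather than $W'$, and the lower spectral bound is what keeps $W'$ of full row rank so that $Z(W')$ can serve as the outer polytope.
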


\begin{corollary}\label{cor:tu-zonotopes}
    Let $W\in \R^{d\times n}$ be a $\Delta$-modular matrix. Consider the zonotope $Z$ generated by $W$. Then, with high probability, \Cref{algo:sqrt_n_log_n_proj_cube} solves $(Z,Q)$\textup{\textsc{-Gap-Containment}} with a scaling factor of
    \begin{align*}
        &s \in O\left(\Delta^2 \cdot \sqrt{\frac{d}{\log d}}\right). 
    \end{align*}
\end{corollary}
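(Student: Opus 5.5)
The plan is to combine \Cref{thm:talagrand-TU-zonotopes} with the sampling analysis behind \Cref{thm:zono-containment_hypercube_algo}. We use the sparsified matrix $W'$ of \Cref{thm:talagrand-TU-zonotopes}, with $\varepsilon = 1/2$ say, in the first line of \Cref{algo:sqrt_n_log_n_proj_cube}, in place of Talagrand's sparsifier. This gives $n' = O(d)$ generators.

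First, translate the support-function inequality into containments. Write $Z' = Z(W')$. Since $h_{Z'}(x) = \|{W'}^\top x\|_1$ and $h_Z(x) = \|W^\top x\|_1$, the two-sided bound of \Cref{thm:talagrand-TU-zonotopes} gives
\[
(1-\varepsilon)^2 Z \subseteq Z' \subseteq \Delta^2(1+\varepsilon)^2 Z
\]
for symmetric convex bodies. With $\varepsilon=1/2$ this reads $\tfrac14 Z \subseteq Z' \subseteq \tfrac94\Delta^2 Z$. Moreover, $W'$ is computable in polynomial time, so the first step of the algorithm is efficient.

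Second, argue soundness and completeness of the gap test, with the algorithm testing points $c\sqrt{n'/\log n'}\,W'y$ for a suitable constant $c$.
\begin{itemize}
\item \textbf{Yes-instance} ($Z \subseteq Q$). Every tested point lies in $c\sqrt{n'/\log n'}\,Z' \subseteq c\,\tfrac94\Delta^2\sqrt{n'/\log n'}\,Z$. So whenever a point outside $Q$ is returned, it certifies $sZ \not\subseteq Q$ with $s = O(\Delta^2\sqrt{d/\log d})$, since $n' = O(d)$ and $\sqrt{n'/\log n'} = \Theta(\sqrt{d/\log d})$. The algorithm's output is therefore consistent with the gap version for this $s(d)$.
\item \textbf{No-instance} ($sZ \not\subseteq Q$). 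Then $\tfrac{s}{4\cdot(9/4)\Delta^2}\,Z' \subseteq \tfrac{s}{4}Z$, and we rescale $Q$ accordingly; more directly, $\tfrac{s}{4}Z \subseteq sZ' \cdot(\text{const})$. Concretely, $Z \subseteq 4Z'$, so $sZ \not\subseteq Q$ implies $4sZ' \not\subseteq Q$. Apply \Cref{prop:zono-containment_hypercube_algo_event} to the zonotope $4sZ'$, whose generator matrix $4sW'$ has $n'$ columns, against $Q$. With probability at least $1/n'^{C}$, a uniform $y\in\{\pm1\}^{n'}$ satisfies $\sqrt{n'/\log n'}\cdot 4s\,W'y\notin Q$. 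After $T = \poly(d)$ repetitions we find such a point with high probability.
\end{itemize}

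Third, calibrate the normalization. The cleanest way is to run the binary-search formulation: the algorithm outputs the largest scale at which all samples lie in $Q$. The two bullets above show this estimate is within a factor of
\[
4\cdot\tfrac94\Delta^2\cdot O\!\left(\sqrt{n'/\log n'}\right) = O\!\left(\Delta^2\sqrt{d/\log d}\right)
\]
of $\alpha^\star$, with high probability.

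The only point needing care is the asymmetry of the sandwich: the lower side loses a constant, while the upper side loses $\Delta^2$. Both losses enter the approximation factor multiplicatively, so the final factor is $O(\Delta^2\sqrt{d/\log d})$ rather than anything better. Everything else is a direct invocation of \Cref{prop:zono-containment_hypercube_algo_event} with $n' = O(d)$. That substitution is precisely what removes the $\sqrt{\log d}$ loss incurred by Talagrand's $O(d\log d)$ bound in \Cref{thm:zono-containment_hypercube_algo}.
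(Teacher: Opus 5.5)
Your argument is the paper's: replace Talagrand's sparsifier in the first step of \Cref{algo:sqrt_n_log_n_proj_cube} by the $\Delta$-modular sparsifier of \Cref{thm:talagrand-TU-zonotopes}, so that $n'=O(d)$, and reuse the $1/\poly(n')$ event bound of \Cref{prop:zono-containment_hypercube_algo_event}. Your explicit tracking of the asymmetric sandwich $\tfrac14 Z\subseteq Z'\subseteq\tfrac94\Delta^2 Z$ is more careful than the paper's two-line substitution argument.

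One step is misstated, however. In your No-instance bullet you start from $sZ\not\subseteq Q$ and obtain escaping points $4s\sqrt{n'/\log n'}\,W'y$. That is not the constant-$c$ scale the algorithm tests. Taken literally, it would make the certified factor of order $s^2$.

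For the decision version, the hypothesis to use is the negation of the Yes-condition. If $Z\not\subseteq Q$, then $Z\subseteq 4Z'$ gives $Z(4W')\not\subseteq Q$. Applying \Cref{prop:zono-containment_hypercube_algo_event} to $4W'$ shows that $4\sqrt{n'/\log n'}\,W'y\notin Q$ with probability at least $1/(n')^{C}$. So $c=4$ works, and by your first bullet every returned point lies in $9\Delta^2\sqrt{n'/\log n'}\,Z\setminus Q$, with $9\Delta^2\sqrt{n'/\log n'} = O(\Delta^2\sqrt{d/\log d})$.

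Your third (binary-search) paragraph implicitly runs exactly this argument with a generic scale in place of $s$, so the repair is only a relabeling of the hypothesis.
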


Before proving the sparsification result in \Cref{thm:talagrand-TU-zonotopes}, we first show how it implies the upper bound for zonotope containment in the oracle model, then discuss slight generalizations of both results.

\begin{proof}[Proof of \Cref{cor:tu-zonotopes}]
    Recall from the proof of \Cref{thm:zono-containment_hypercube_algo} that for a uniformly at random picked vertex $y \in \{-1,1\}^n$ we have the following probability bound:
    \begin{align*}
        \P{}{\sqrt{\frac{n}{\log n}} \cdot Wy \notin Q} = \P{}{\sqrt{\frac{n}{\log n}}\cdot y \notin f_W^{-1}(Q)}  \geq \frac{1}{\poly(n)}.
    \end{align*}
    By \Cref{thm:talagrand-TU-zonotopes}, we know that $\Delta$-modular zonotopes can be approximated up to a $\Theta(\Delta)$ factor using $n = \Theta(d)$ generators.
    Substituting this value of $n$, with high probability, \Cref{algo:sqrt_n_log_n_proj_cube} is able to detect a point $x \in \left(\Theta(\Delta^2 \cdot \sqrt{\nicefrac{d}{\log d}})\cdot Z\right) \setminus Q$  with $\poly(d)$ many samples. 
\end{proof}

 \begin{remark} \label{rmk:more_general_Delta_mod}
   Theorem~\ref{thm:talagrand-TU-zonotopes} and Theorem~\ref{cor:tu-zonotopes} can also be proved for the slightly more general class of matrices of the form $(c_1\cdot w_1,\dots,c_n\cdot w_n)$ where $(c_i)_i \in \N^n$ are weights and $W = (w_1,\dots,w_n)$ is $\Delta$-modular. This can be shown by splitting up each column $w_i$ into $c_i$ copies and noting that the resulting matrix is now $\Delta$-modular while the generated zonotope remains the same.

 \end{remark}

\subsection{Proof of \texorpdfstring{\Cref{thm:talagrand-TU-zonotopes}}{Theorem 3.1}}

In the remainder of this section we thus need to establish \Cref{thm:talagrand-TU-zonotopes}. The proof relies on a well-known result in spectral sparsification of matrices:

\begin{theorem}[Spectral sparsification \cite{BatsonSS14}]\label{thm:bss-sparsification}
    Let $B \in \R^{d \times n}$ be an arbitrary matrix with $n \ge d$ and suppose $0 < \varepsilon < 1$ is given. Then, one can find a nonnegative diagonal matrix $D = \mathrm{diag}(c_1, \ldots, c_n) \in \R^{n\times n}$ with at most $O(\nicefrac{d}{\varepsilon^2})$ nonzero entries for which, with high probability,
    \[
        (1-\varepsilon)^2 BB^\top \preceq BDB^\top \preceq (1 + \varepsilon)^2BB^\top.
    \]
    Moreover, the diagonal matrix $D$ can be found in polynomial time.\footnote{As it is customary the symbol $\preceq$ denotes the Löwner order, i.e., for every pair of symmetric matrices $A, B \in \R^{d \times d}$, $A \preceq B$ means that $B - A$ is a positive semidefinite matrix.} 
\end{theorem}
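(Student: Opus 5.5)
We will prove the statement by the barrier-potential method of Batson, Spielman and Srivastava. First we reduce to the isotropic case. Put $M = BB^\top$; if $M$ is singular we restrict everything to its range, so assume $M \succ 0$. Set $v_i = M^{-1/2} b_i$, where $b_i$ are the columns of $B$, so that $\sum_i v_i v_i^\top = I_d$. For any nonnegative weights $c_i$ we have $BDB^\top = M^{1/2}\bigl(\sum_i c_i v_iv_i^\top\bigr)M^{1/2}$. Hence it suffices to find weights with $O(d/\varepsilon^2)$ nonzero entries such that all eigenvalues of $A := \sum_i c_i v_iv_i^\top$ lie in $[(1-\varepsilon)^2,(1+\varepsilon)^2]$. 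We will in fact aim for $\lambda_{\max}(A)/\lambda_{\min}(A) \le \bigl(\tfrac{1+\varepsilon}{1-\varepsilon}\bigr)^2$ and then rescale $D$ by a scalar. The procedure is deterministic, so the conclusion holds with probability one.

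The construction is greedy and runs for $q = \lceil d/\varepsilon^2\rceil$ rounds, each adding one rank-one term $t\,v_iv_i^\top$. Throughout we maintain two barriers $l < \lambda_{\min}(A) \le \lambda_{\max}(A) < u$ together with the potentials
\begin{align*}
\Phi^u(A) = \operatorname{tr}(uI-A)^{-1}, \qquad \Phi_l(A) = \operatorname{tr}(A-lI)^{-1}.
\end{align*}
We start from $A=0$, $u_0 = d/\varepsilon_U$ and $l_0 = -d/\varepsilon_L$, so that $\Phi^{u_0}(0) = \varepsilon_U$ and $\Phi_{l_0}(0) = \varepsilon_L$. In each round the barriers are shifted by fixed amounts $u \mapsto u+\delta_U$ and $l \mapsto l+\delta_L$, and we require that neither potential increases.

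The core consists of two one-step lemmas, both proved with the Sherman--Morrison formula. For the upper barrier, define
\begin{align*}
U_A(v) = \frac{v^\top (u'I-A)^{-2} v}{\Phi^{u}(A)-\Phi^{u'}(A)} + v^\top (u'I-A)^{-1}v, \qquad u' = u+\delta_U.
\end{align*}
If $1/t \ge U_A(v)$, then $\Phi^{u'}(A+tvv^\top) \le \Phi^u(A)$ and $\lambda_{\max}(A+tvv^\top) < u'$.

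For the lower barrier, define the analogous quantity
\begin{align*}
L_A(v) = \frac{v^\top (A-l'I)^{-2} v}{\Phi_{l'}(A)-\Phi_{l}(A)} - v^\top (A-l'I)^{-1}v, \qquad l' = l+\delta_L.
\end{align*}
If $0 < 1/t \le L_A(v)$, then $\Phi_{l'}(A+tvv^\top) \le \Phi_l(A)$ and $\lambda_{\min}$ stays above $l'$. This lower-barrier lemma is the main obstacle. Sherman--Morrison gives the decrease term directly, but we must also control the increase $\Phi_{l'}(A)-\Phi_l(A)$ caused by moving the barrier. The first thing to try is a convexity estimate: bounding $\operatorname{tr}(A-l'I)^{-2}$ from below in terms of $\delta_L\operatorname{tr}(A-l'I)^{-2}$ relative to the potential, which uses $1/\delta_L > \varepsilon_L$.

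The next step is averaging. Since $\sum_i v_iv_i^\top = I$, summing over $i$ turns every quadratic form into a trace. Summing $U_A(v_i)$ gives
\begin{align*}
\sum_i U_A(v_i) \le \frac{1}{\delta_U} + \Phi^u(A) \le \frac{1}{\delta_U} + \varepsilon_U,
\end{align*}
and the lower lemma's estimate gives
\begin{align*}
\sum_i L_A(v_i) \ge \frac{1}{\delta_L} - \varepsilon_L.
\end{align*}
Choose the parameters so that $\frac{1}{\delta_U}+\varepsilon_U \le \frac{1}{\delta_L}-\varepsilon_L$; one workable choice is $\delta_L = 1$, $\delta_U = \frac{1+\varepsilon}{1-\varepsilon}$, $\varepsilon_L = \varepsilon$, $\varepsilon_U = \frac{\varepsilon}{1+\varepsilon}$. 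Then some index $i$ satisfies $U_A(v_i) \le L_A(v_i)$, and we may pick $t$ with $1/t$ between the two values. Both invariants are then preserved.

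After $q$ rounds the weights have at most $q = O(d/\varepsilon^2)$ nonzero entries. The final barriers are $u_q = d/\varepsilon_U + q\delta_U$ and $l_q = -d/\varepsilon_L + q\delta_L$. A direct computation with the above choices gives
\begin{align*}
\frac{u_q}{l_q} \le \Bigl(\frac{1+\varepsilon}{1-\varepsilon}\Bigr)^2,
\end{align*}
and rescaling the weights yields the claimed Löwner bounds. Every round requires only matrix inverses and traces for at most $n$ candidate vectors, so the whole procedure runs in polynomial time.
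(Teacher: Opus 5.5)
The paper does not prove this theorem; it is quoted from Batson--Spielman--Srivastava. Your outline (isotropic reduction, two barrier potentials, Sherman--Morrison one-step lemmas, averaging via $\sum_i v_iv_i^\top=I$) is exactly their argument. As written, though, the averaging step fails, because your ``workable'' parameters violate the inequality you yourself require. With $\delta_U=\frac{1+\varepsilon}{1-\varepsilon}$ and $\varepsilon_U=\frac{\varepsilon}{1+\varepsilon}$ you get $\frac{1}{\delta_U}+\varepsilon_U=\frac{1}{1+\varepsilon}$, while $\frac{1}{\delta_L}-\varepsilon_L=1-\varepsilon$. Since $(1-\varepsilon)(1+\varepsilon)<1$, we have $\frac{1}{1+\varepsilon}>1-\varepsilon$, so nothing guarantees an index with $U_A(v_i)\le L_A(v_i)$. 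The symptom is that your final ratio comes out strictly better than needed, namely $\frac{(1+\varepsilon)(1+\varepsilon-\varepsilon^2)}{(1-\varepsilon)^2}$: the upper barrier $u_0=d/\varepsilon_U$ starts too low. The correct choice is $\varepsilon_U=\frac{\varepsilon(1-\varepsilon)}{1+\varepsilon}$. It makes the two sides equal and gives $u_q/l_q=\bigl(\frac{1+\varepsilon}{1-\varepsilon}\bigr)^2$ exactly when $q=d/\varepsilon^2$.

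The second problem is that the bound $\sum_i L_A(v_i)\ge\frac{1}{\delta_L}-\varepsilon_L$, which is the real content of the proof, is only gestured at. The ``convexity estimate'' you describe is circular as stated. You also place the difficulty in the wrong spot. The one-step lower lemma is pure Sherman--Morrison: the increase $\Phi_{l'}(A)-\Phi_l(A)$ is already built into the denominator of $L_A$, and $\lambda_{\min}(A)>l'$ follows from $\lambda_{\min}(A)-l\ge 1/\Phi_l(A)\ge 1/\varepsilon_L>\delta_L$. The work is in the averaged bound. Write $\lambda_j$ for the eigenvalues of $A$ and $S=\sum_j(\lambda_j-l')^{-1}(\lambda_j-l)^{-1}$, so that $\Phi_{l'}(A)-\Phi_l(A)=\delta_L S$. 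Then $\sum_i L_A(v_i)\ge\frac{1}{\delta_L}-\Phi_l(A)$ reduces to $\sum_j(\lambda_j-l')^{-2}(\lambda_j-l)^{-1}\ge\delta_L S^2$. This follows from Cauchy--Schwarz, $S^2\le\Phi_l(A)\sum_j(\lambda_j-l')^{-2}(\lambda_j-l)^{-1}$, together with $\delta_L\Phi_l(A)\le\delta_L\varepsilon_L\le 1$. With these two repairs, and after discarding zero columns so that the chosen $t$ is finite and positive, the rest of your plan is correct. That includes the reduction to $\sum_i v_iv_i^\top=I$ on the range of $BB^\top$, the final scalar rescaling, and the remark that the procedure is deterministic.
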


The main ingredient to prove \Cref{thm:talagrand-TU-zonotopes} is \Cref{lem:tu-technical}, which characterizes the inner product between  a facet-defining unit vector and a generator of a $\Delta$-modular zonotope:

\begin{lemma}\label{lem:tu-technical}
    Let $W\in \R^{d\times n}$ be a $\Delta$-modular matrix. Moreover, let $w_1,\ldots,w_{d-1}$ be linearly independent
    columns of $W$, and let $u\in\S^{d-1}:= \{v\in \R^{d}:\|v\|_2 =1\}$ satisfy $u^\top w_i=0$ for all $i\le d-1$.
    Then, there exist $\alpha, \beta>0$, depending only on $u$ and $w_1,\ldots,w_{d-1}$ and satisfying $\nicefrac{\beta}{\alpha} \le \Delta$, such that for every
    column $w$ of $W$,
    \[
        |u^\top w| \in \{0\} \cup [\alpha, \beta].
    \]
\end{lemma}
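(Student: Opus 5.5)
Fix the linearly independent columns $w_1,\ldots,w_{d-1}$ and the unit normal $u$. The idea is to write $u^\top w$ as a determinant. Consider the linear functional
\[
  \varphi(w) := \det(w_1,\ldots,w_{d-1},w), \qquad w\in\R^d .
\]
It vanishes on $\mathrm{span}(w_1,\ldots,w_{d-1}) = u^\perp$. A nonzero linear functional on $\R^d$ with kernel $u^\perp$ is a multiple of $u^\top(\cdot)$. Hence there is a constant $\gamma\neq 0$, depending only on $u$ and $w_1,\ldots,w_{d-1}$, with
\[
  \varphi(w)=\gamma\, u^\top w \quad\text{for all } w .
\]
Explicitly, $|\gamma|$ equals the $(d-1)$-dimensional volume of the parallelepiped spanned by $w_1,\ldots,w_{d-1}$. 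To see that $\gamma\neq 0$ we use that $W$ has full row rank: some column $w$ of $W$ lies outside $u^\perp$, and then $\varphi(w)\neq 0$.

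Now let $w$ be any column of $W$. There are two cases.
\begin{itemize}
  \item If $u^\top w = 0$, then $|u^\top w|=0$, which is in the allowed set.
  \item Otherwise $w\notin u^\perp$, so $(w_1,\ldots,w_{d-1},w)$ is a non-singular $d\times d$ submatrix of $W$. The columns are distinct because $w\notin u^\perp$ while each $w_i\in u^\perp$. By $\Delta$-modularity~\eqref{eq:bdd-determinant}, $1\le |\varphi(w)|\le \Delta$. Therefore
  \[
    |u^\top w| = \frac{|\varphi(w)|}{|\gamma|}\in\Bigl[\frac{1}{|\gamma|},\frac{\Delta}{|\gamma|}\Bigr].
  \]
\end{itemize}

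Set $\alpha := 1/|\gamma|$ and $\beta := \Delta/|\gamma|$. Both are positive, depend only on $u$ and $w_1,\ldots,w_{d-1}$, and satisfy $\beta/\alpha=\Delta$. This gives the claim.

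The argument is essentially routine. The only points that need checking are that $\gamma\neq0$, which comes from full row rank, and that in the nonzero case the $d$ columns really form a $d\times d$ submatrix to which~\eqref{eq:bdd-determinant} applies. The mechanism to emphasize is that one fixed scalar $\gamma$ converts every relevant determinant into the inner product with $u$. This uniform normalization is what makes the ratio $\beta/\alpha$ independent of $w$.
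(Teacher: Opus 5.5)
Your proof is correct and is essentially the paper's argument. The paper sets $\tilde U=[w_1,\ldots,w_{d-1},u]$ and uses Cramer's rule to get $u^\top w=\det([w_1,\ldots,w_{d-1},w])/\det(\tilde U)$, which is your identity $\varphi(w)=\gamma\,u^\top w$ with $\gamma=\det(\tilde U)$, giving the same $\alpha=1/|\det\tilde U|$ and $\beta=\Delta/|\det\tilde U|$. As a small simplification, $\gamma\neq 0$ follows directly from $\varphi(u)=\det(\tilde U)\neq 0$ (since $u\notin\mathrm{span}\{w_1,\ldots,w_{d-1}\}$), so you do not need full row rank for that step.
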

\begin{proof}
    Let us set $\tilde U=[w_1, \dots, w_{d-1}, u]\in\R^{d\times d}$.
    Since $w_1,\dots,w_{d-1}$ are linearly independent and $u\perp \mathrm{span}\{w_1,\ldots,w_{d-1}\}$ with $u\neq 0$,
    the columns of $\tilde U$ are independent, hence $\det(\tilde U)\neq 0$.
    
    Let us now fix any column $w$ of $W$ and write $w=\tilde Ux$ for $x\in\R^d$.
    Taking the inner product with $u$ and using $u^\top w_i=0$ gives
    \[
        u^\top w = u^\top(\tilde Uz)=x_d  u^\top u = x_d \|u\|_2^2 = x_d,
    \]
    so $u^\top w=0$ if and only if $x_d=0$. Otherwise, by Cramer's rule,
    \[
        x_d = \frac{\det([w_1, \cdots , w_{d-1}, w])}{\det([w_1, \cdots , w_{d-1}, u])} \Longrightarrow \frac{1}{\left|\det(\tilde U)\right|} \le |x_d| \le \frac{{\Delta}}{\left|\det(\tilde U)\right|}.
    \]
    The implication above follows since the numerator is the determinant of a $d\times d$ submatrix of $W$, which, by assumption, is bounded by $1$ from below and ${\Delta}$ from above. Therefore, defining $\alpha = \nicefrac{1}{|\det(\tilde U)|}$ and $\beta = \nicefrac{{\Delta}}{|\det(\tilde U)|}$, we obtain that $|u^\top w| \in [\alpha, \beta]$, as claimed.
\end{proof}

Furthermore, we recall the following result on the facet structure of zonotopes:

\begin{restatable}[Section 7.3 in \citep{Ziegler12}]{lemma}{lemzonofacts} \label{lem:zono-facts}
    Let $Z(W)$ be the zonotope generated by $W=\{w_1, \ldots, w_n\} \subset \R^d$, assuming $\rank(W)=d$. We have that:
    \begin{enumerate}[i)]
        \item For any direction $u \in \R^d \setminus \{0\}$, we have that the face $F_Z(u)$ is the translation of a lower-dimensional zonotope, that is $F_Z(u) = v(u) + \sum_{i:  w_i^\top u = 0} [-w_i, w_i]$, where $v(u) = \sum_{w_i^\top u \neq 0} \sign{w_i^\top u} \cdot w_i$. Consequently, $\dim(F_Z(u)) = \rank(\{w_i\}_{i:  w_i^\top u = 0})$, and, in particular, $u$ defines a facet of $Z$ if and only if $\rank(\{w_i\}_{i:  w_i^\top u = 0}) = d-1$. 
        \item Consider $W^\prime$ obtained from $W$ by deleting some columns and rescaling remaining columns by nonzero scalars. Then, every facet normal vector of $Z(W^\prime)$ is also a facet normal vector of $Z(W)$.
    \end{enumerate}
\end{restatable}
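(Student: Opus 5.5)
The plan is to compute the face $F_Z(u)=\arg\max\{u^\top z : z\in Z(W)\}$ explicitly through the parametrization $z=Wx$, $x\in[-1,1]^n$. Part~ii) will then follow from part~i) as a rank comparison.

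For part~i), fix $u\neq 0$ and write $u^\top Wx=\sum_{i=1}^n (w_i^\top u)\,x_i$. This linear objective over the cube separates coordinatewise, so $x$ is a maximizer if and only if two conditions hold: $x_i=\sign{w_i^\top u}$ for every $i$ with $w_i^\top u\neq 0$, and $x_i\in[-1,1]$ is arbitrary for every $i$ with $w_i^\top u=0$. Taking the image under $W$ gives
\[
F_Z(u)=W\bigl(\arg\max_{x\in[-1,1]^n}u^\top Wx\bigr)=v(u)+\sum_{i:\,w_i^\top u=0}[-w_i,w_i].
\]
This uses only that the image of a product of intervals under a linear map is the Minkowski sum of the images of the intervals. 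The affine hull of a Minkowski sum of segments $[-w_i,w_i]$ is a translate of $\mathrm{span}\{w_i\}$, so $\dim F_Z(u)=\rank(\{w_i\}_{i:\,w_i^\top u=0})$. Since $\rank(W)=d$, the zonotope $Z$ is full-dimensional. Hence $u$ is a facet normal exactly when this face has dimension $d-1$, which gives the facet criterion.

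For part~ii), write $W'$ with columns $c_j w_{i_j}$, where $c_j\neq 0$ and $j$ ranges over a subset $S$ of the indices. I will assume $\rank(W')=d$ so that facets of $Z(W')$ are $(d-1)$-dimensional; otherwise "facet" is read relative to the affine hull and the same argument applies inside $\mathrm{span}(W')$. Let $u$ be a facet normal of $Z(W')$. By part~i) applied to $W'$, the columns $c_jw_{i_j}$ with $(c_jw_{i_j})^\top u=0$ have rank $d-1$. Rescaling by nonzero $c_j$ changes neither orthogonality to $u$ nor rank. Hence $\{w_i : w_i^\top u=0\}$ contains $d-1$ linearly independent vectors, so its rank is at least $d-1$. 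All these vectors lie in $u^\perp$, which has dimension $d-1$ because $u\neq0$, so the rank is exactly $d-1$. Part~i) applied to $W$ then shows that $u$ defines a facet of $Z(W)$.

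There is no serious obstacle here. The only points needing care are the justification that $W$ maps the maximizing subcube onto the stated Minkowski sum, and the full-rank assumption on $W'$ in part~ii), which must be stated or handled relatively.
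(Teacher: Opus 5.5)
Your argument is correct when $\mathrm{rank}(W')=d$, and it is the standard one. The paper gives no proof of its own and defers to Ziegler, where a zonotope is the image of $[-1,1]^n$ under $W$ and $F_Z(u)$ is the image of the cube face maximizing $(W^\top u)^\top x$. That is exactly your computation, and (ii) then follows from your rank count in $u^\perp$.

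The one thing to fix is your parenthetical claim that the case $\mathrm{rank}(W')<d$ goes through with facets read relative to the affine hull. It does not, so full rank of $W'$ is a genuine hypothesis of (ii). Take $d=3$, $W=(e_1,e_2,e_3,e_1+e_2)$ and $W'=(e_1+e_2,e_3)$. The vector $u=(e_1+e_2)/\sqrt{2}$ lies in $\mathrm{span}(W')$. It is the outward normal of the edge $(e_1+e_2)+[-e_3,e_3]$, which is a relative facet of the rectangle $Z(W')$. Yet $e_3$ is the only column of $W$ orthogonal to $u$, so by (i) $F_{Z(W)}(u)=(2,2,0)^\top+[-e_3,e_3]$ is only an edge of $Z(W)$. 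In general your argument only yields $\mathrm{rank}(W')-1$ independent columns of $W$ in $u^\perp$, which is not enough for a facet of $Z(W)$.

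This does no harm in the paper. There the Batson--Spielman--Srivastava guarantee gives $\sum_{i\in I}c_iw_iw_i^\top\succeq(1-\varepsilon)^2WW^\top\succ 0$, so the retained columns span $\R^d$ and $\mathrm{rank}(W')=d$. You should simply state $\mathrm{rank}(W')=d$ as a hypothesis and drop the relative-facet remark.
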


With the above results, we now prove \Cref{thm:talagrand-TU-zonotopes}:

\begin{proof}[Proof of \Cref{thm:talagrand-TU-zonotopes}]
    Let $D=\mathrm{diag}(c_1,\ldots,c_n)$ be as in \Cref{thm:bss-sparsification}, let
    $I=\{i : c_i>0\}$ so that $|I|\le O(\nicefrac{d}{\varepsilon^2})$, and let $W^\prime = (c_iw_i)_{i \in I}$. We stress that $W^\prime$ is found in polynomial time by the spectral sparsification routine \cite{BatsonSS14}.
    
    We first compare support functions on facet normals. Specifically, let $u\neq 0$ be any facet normal vector of $Z(W)$.
    By \Cref{lem:zono-facts}(i), the set $\{w_i : w_i^\top u=0\}$ has rank $d-1$,
    so we can choose linearly independent columns $w_1,\ldots,w_{d-1}$ of $W$ with $w_j^\top u=0$.

    We, thus, apply \Cref{lem:tu-technical} to obtain $\alpha,\beta>0$ such that
    $|w_i^\top u|\in \{0\} \cup [\alpha,\beta]$ for all columns $w_i$ of $W$, and so $(w_i^\top u)^2 \in [\alpha, \beta] \cdot |w_i^\top u|$.
    Using \Cref{thm:bss-sparsification}, and observing that $\nicefrac{\beta}{\alpha}  = \Delta$, gives
    \begin{align*}
        (1-\varepsilon)^2\|W^\top u\|_1 &= (1-\varepsilon)^2\sum_{i=1}^n |w_i^\top u| \le \frac{(1-\varepsilon)^2}{\alpha} \cdot \sum_{i=1}^n(w_i^\top u)^2 \\
        &\le \frac{1}{\alpha} \cdot \sum_{i \in I}c_i(w_i^\top u)^2 \le \frac{\beta}{\alpha} \cdot \sum_{i \in I}c_i|w_i^\top u| = \Delta \cdot \|(W^\prime)^\top u\|_1 \\
        &\le \frac{\Delta}{\alpha} \cdot (1+\varepsilon)^2\sum_{i=1}^n(w_i^\top u)^2 \le \frac{\Delta\beta}{\alpha} \cdot(1+\varepsilon)^2\sum_{i=1}^n|w_i^\top u| = \Delta^2(1+\varepsilon)^2\|W^\top u\|_1.
    \end{align*}
    To conclude, we recall that for a full-dimensional polytope $Q$ containing the origin, a convex body $K$ is contained in $Q$ if and only if $h_K(a)\le h_Q(a)$ for all outward facet normals $a$ of $Q$. 
    Since, we have just shown that for all facet-defining vectors $u$ of $Z(W)$ and thus, by \cref{lem:zono-facts}(ii), also of $Z(W^\prime)$ it holds that $(1-\varepsilon)^2h_{Z(W)}(u) \leq h_{Z(W^\prime)}(u) \leq \Delta^2(1+\varepsilon)^2h_{Z(W)}(u)$, this suffices to deduce
    \[
        (1-\varepsilon)^2Z(W) \subseteq Z(W^\prime)\subseteq \Delta^2(1+\varepsilon)^2Z(W),
    \]
    as desired.
\end{proof}

\begin{remark}
    Another class of zonotopes for which a similar sparsification result holds are the so-called weighted graphical zonotopes. Given an undirected connected graph $G=(V=[d],E)$ with edge-weights $c \in \R^{|E|}$ the generators $W_G$ are given by $w_e = c_e(e_u-e_v)$ for every $e=\{u,v\} \in E$ with $u<v$, where the $e_u$'s are the the standard basis vectors of $\R^d$. Note that $\rank{W_G} = d-1$ meaning that the results above do not directly apply to these zonotopes. It can be shown that very similar techniques as above can be used to get the same sparsification results for graphical zonotopes as for $1$-modular zonotopes. 
\end{remark}

\section{Universal Lower Bounds for Zonotope Containment} \label{sec:universal}

In this section, we show that, in the oracle model, \emph{for all} zonotopes, one cannot improve on the $\Omega(\sqrt{\nicefrac{d}{\log d}})$ factor for containment. We emphasize that our result is a \emph{universal} lower bound and, as such, it is not implied by the lower bound in \cite{khot2007linear}, where the authors show that there \emph{exists} a zonotope—namely, the hypercube—for which this factor is tight. We do, however, apply a similar strategy to deduce our lower bound. In particular, this means relating the containment problem to approximating $Z$ by a polytope $P$.

To this end, we first recall that the $(Z,Q)$\textup{\textsc{-Opt-Containment}} problem is equivalent to computing the $\|\cdot\|_Z$-inradius of a convex body $Q$. Then, using the same arguments as \citet[Section 3.E]{brieden2001deterministic}, we observe that any algorithm approximating the $\|\cdot\|_Z$-inradius of a convex body $Q$ up to a factor of $s$ implicitly constructs a polytope $P \subseteq Z \subseteq sP$. Moreover, given a polytope $P \subseteq Z \subseteq sP$, we could simply check whether or not its vertices are inside $Q$ to compute its $\|\cdot\|_Z$-inradius. Hence, finding a polytope $P$ such that $P \subseteq Z \subseteq sP$ is equivalent to the $(Z,Q)$\textup{\textsc{-Opt-Containment}} problem. 
\par
To prove this universal lower bound for the polytope approximation of $Z$, we first perform a preprocessing step in which the zonotope becomes \emph{normalized} as introduced in \cite{bozzai2023vector}. We say that a matrix $W = (w_1,\dots,w_n) \in \R^{d \times n}$ (and the generated zonotope $Z$) is \emph{normalized} if (i) its rows form an orthonormal basis, i.e., $WW^\top = I_d$, and (ii) its columns have length $\|w_i\|_2 \leq 2\sqrt{\nicefrac{d}{n}}$ for all $i \in \{1,\ldots,n\}$. The following lemma states that any zonotope can be made normalized up to a constant factor:
\begin{lemma}[\cite{bozzai2023vector}] \label{lem:normalization}
  For any zonotope $\tilde Z\subseteq \R^d$ there exists an invertible linear transformation $T$ and a normalized zonotope $Z$ such that
  \[
        \frac{4}{5} Z \subseteq T(\tilde Z) \subseteq Z.
  \]
\end{lemma}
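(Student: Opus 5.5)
The plan is to put $\tilde Z$ into isotropic position and then to cut each generator into equal pieces that are short enough. Neither operation should hurt much. Positive rescalings of a single generator, and splitting one generator into several parallel copies, leave the zonotope unchanged. The constant $\frac45$ should come from rounding the number of pieces.

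First, a preprocessing reduction. Write $\tilde Z = Z(\tilde W)$ with $\tilde W \in \R^{d\times m}$. We may assume $\tilde W$ has rank $d$; otherwise we work in its span, or note that the statement implicitly assumes a full-dimensional zonotope. Set $T_0 = (\tilde W\tilde W^\top)^{-1/2}$ and $V = T_0\tilde W$, so that $VV^\top = I_d$. Then $T_0(\tilde Z) = Z(V)$. The columns $v_j$ satisfy $\sum_j \|v_j\|_2^2 = \operatorname{tr}(VV^\top) = d$. The only defect left is that some columns may be too long relative to $\sqrt{d/n}$.

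Second, splitting. Fix a large integer $N$ (to be chosen). Replace each $v_j$ by $k_j$ copies of $v_j/\sqrt{k_j}$, where $k_j = \lceil N\|v_j\|_2^2\rceil$. Call the resulting matrix $W$, with $n=\sum_j k_j$ columns.
\begin{itemize}
\item Isotropy is preserved: $WW^\top = \sum_j k_j \cdot v_jv_j^\top/k_j = I_d$.
\item Each new column satisfies $\|w\|_2^2 = \|v_j\|_2^2/k_j \le 1/N$.
\item The count satisfies $n \le Nd + m$, so choosing $N \ge m/d$ gives $n \le 2Nd$ and hence $\|w\|_2 \le \sqrt{2d/n} \le 2\sqrt{d/n}$.
\end{itemize}
So $W$ is normalized, and it remains to compare $Z(W)$ with $Z(V)$.

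Third, the comparison. The $k_j$ copies of $v_j/\sqrt{k_j}$ generate the segment $\sqrt{k_j}\,[-v_j,v_j]$. Hence $Z(W) = \sum_j \sqrt{k_j}\,[-v_j,v_j]$, which is not $Z(V)$. This mismatch is the main obstacle: splitting into $k$ copies of $v/\sqrt k$ inflates the segment by $\sqrt k$, while copies of $v/k$ would ruin $WW^\top=I$. The fix is to choose the transformation $T$ differently: apply the isotropic map after reweighting. Concretely, seek positive weights $\lambda_j$ and the matrix $W$ with columns $\lambda_j v_j$ repeated so that the segments match exactly. Take $k_j$ copies of $\tilde w_j/k_j$ for an arbitrary $k_j$ (which keeps the zonotope exact), and only then apply $T=(\sum_j \tilde w_j\tilde w_j^\top/k_j)^{-1/2}$. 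This makes $T(\tilde Z)$ exactly generated by an isotropic matrix.

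The remaining task is to pick the $k_j$ so that all resulting columns have nearly equal leverage, i.e.\ comparable length $\approx\sqrt{d/n}$. This is a fixed-point or iterative-rebalancing problem, analogous to Lewis weights or Forster/John-type positions. I would first try an iterative argument: repeatedly split the currently longest column, and show that the total leverage $d$ forces all leverages below $4d/n$. I expect the slack to be absorbed by allowing a $\frac45$ loss. That loss can be used to discard or shrink residual long columns: each column is bounded by $1$ in norm, so removing or rounding a few changes the zonotope by at most a $\frac45$ factor, via the support function $h_Z(a)=\|W^\top a\|_1$ compared against $\|W^\top a\|_2 = \|a\|_2$.
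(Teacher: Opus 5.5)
Your third step sets up the right framework, and it is correct. Replacing $\tilde w_j$ by $k_j$ copies of $\tilde w_j/k_j$ leaves $\tilde Z$ unchanged. With $G=\sum_j \tilde w_j\tilde w_j^\top/k_j$ and $T=G^{-1/2}$, the body $T(\tilde Z)$ is generated by an isotropic matrix in which each copy of $\tilde w_j$ has squared length $\ell_j=\tilde w_j^\top G^{-1}\tilde w_j/k_j^2$. (You rightly abandon the first two steps. Note also that positively rescaling a generator does change the zonotope; only splitting a generator into parallel pieces that sum to it is harmless.)

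\textbf{The gap.} After this reduction, the entire content of the lemma is the existence of multiplicities with $\ell_j\le 4d/n$ for all $j$, and you leave exactly this open. The greedy ``split the longest column'' scheme comes with no termination or potential argument, and splitting has side effects: increasing $k_j$ decreases $G$ in the L\"owner order, so it can only increase $\ell_i$ for every $i\neq j$. The identity $\sum_j k_j\ell_j=\mathrm{tr}(G^{-1}G)=d$ only says that the \emph{average} squared length is $d/n$. It cannot by itself force the maximum below $4d/n$.

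Your proposed use of the $\frac45$ slack also fails. In an isotropic zonotope a single column can carry the whole support function in some direction: take $W=I_d$ and delete $e_1$, and $h(e_1)$ drops from $1$ to $0$. The bound $|w^\top a|\le\|a\|_2\le h_Z(a)$ only says one column is at most everything. Deleting columns also destroys $WW^\top=I_d$.

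\textbf{How to close it.} What you need is a balanced \emph{real} solution of the fixed-point equation $\tilde w_j^\top G^{-1}\tilde w_j=\mu k_j^2$. This is precisely the defining equation of $\ell_1$ Lewis weights, so you can cite Lewis' theorem or prove it directly as follows.
\begin{enumerate}
\item Discard zero generators, put $k_j=e^{s_j}$, and minimize $\phi(s)=\log\det(\sum_j e^{-s_j}\tilde w_j\tilde w_j^\top)$ subject to $\sum_j e^{s_j}\le N$.
\item By Cauchy--Binet, $\phi$ is a log-sum-exp of linear functions of $s$, hence convex. Since $\mathrm{rank}\,\tilde W=d$, $\phi$ tends to $+\infty$ when any $s_j\to-\infty$, so a minimizer $k^*$ exists.
\item The KKT conditions give $\tilde w_j^\top (G^*)^{-1}\tilde w_j=\mu (k_j^*)^2$ for all $j$, and the trace identity gives $\mu=d/N$.
\item Round: pick $c\ge\max\{m/N,\,1/\min_j k_j^*\}$ and $k_j=\lceil c k_j^*\rceil\in[ck_j^*,2ck_j^*]$. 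Then $n\le cN+m\le 2cN$ and $G\succeq G^*/(2c)$, so every copy has squared length at most $2d/(cN)\le 4d/n$.
\end{enumerate}
This gives $T(\tilde Z)=Z$ with $Z$ normalized, so the existence statement needs no $\frac45$ loss at all.
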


The next lemma states that every normalized zonotope enjoys the property of being in approximate John's position and to have large volume. In fact, while $Z \subseteq \sqrt{n} \B_2^d$, $Z$ has the same volume as a Euclidean ball of radius $\Omega(\sqrt{n})$.

\begin{lemma} \label{lem:SupportFunctionOfNormalizedZonotope}
  Let $Z$ be a normalized zonotope. Then, the following properties hold:
  \begin{align*}
      \frac{1}{2}\sqrt{\frac{n}{d}} \, \B_2^d \subseteq Z \subseteq \sqrt{n} \, \B_2^d \tag{i}\\
      \vol(Z) \ge \left(\frac{n}{d}\right)^{d/2} \tag{ii}.
  \end{align*}
\end{lemma}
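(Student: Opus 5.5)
Everything goes through the support function $h_Z(a) = \|W^\top a\|_1 = \sum_i |w_i^\top a|$ and the two normalization conditions $WW^\top = I_d$ and $\|w_i\|_2 \le 2\sqrt{d/n}$.

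\emph{Outer inclusion.} For a unit vector $a$, Cauchy--Schwarz gives
\[
h_Z(a) \le \sqrt{n}\,\|W^\top a\|_2 = \sqrt{n}\sqrt{a^\top WW^\top a} = \sqrt{n}.
\]
Hence $Z \subseteq \sqrt{n}\,\B_2^d$.

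\emph{Inner inclusion.} Here we lower bound the $\ell_1$ norm by the ratio of $\ell_2^2$ to $\ell_\infty$. For a unit vector $a$,
\[
\sum_i |w_i^\top a| \ge \frac{\sum_i (w_i^\top a)^2}{\max_i |w_i^\top a|} \ge \frac{\|W^\top a\|_2^2}{\max_i\|w_i\|_2} \ge \frac{1}{2\sqrt{d/n}} = \frac12\sqrt{\frac nd}.
\]
Since $Z$ is convex and symmetric, $h_Z(a) \ge \tfrac12\sqrt{n/d}$ for every unit $a$ gives $\tfrac12\sqrt{n/d}\,\B_2^d \subseteq Z$. This proves (i).

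\emph{Volume bound.} Inclusion (i) only yields $\vol(Z) \ge (n/(4d))^{d/2}\vol(\B_2^d)$, and $\vol(\B_2^d) \approx (2\pi e/d)^{d/2}$ is far too small. So I would use the exact volume formula for zonotopes instead:
\[
\vol(Z) = 2^d \sum_{|S|=d} |\det(W_S)|.
\]
Since $|\det W_S| \ge \det(W_S)^2 / \max_S |\det W_S|$, it suffices to control the sum of squared determinants and the largest determinant.

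\emph{Sum of squares.} By Cauchy--Binet, $\sum_S \det(W_S)^2 = \det(WW^\top) = 1$.

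\emph{Largest determinant.} By Hadamard's inequality and the column length bound, $|\det W_S| \le \prod_{i\in S}\|w_i\|_2 \le (2\sqrt{d/n})^d$.

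\emph{Combining.}
\[
\vol(Z) \ge 2^d \cdot \frac{1}{(2\sqrt{d/n})^d} = \left(\frac nd\right)^{d/2},
\]
which is exactly (ii).

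The main obstacle is the volume bound: one has to see that the ball inclusion is too weak and switch to the Cauchy--Binet / Hadamard argument. The constants then match exactly, because the factor $2^d$ in the volume formula cancels the $2$ in the column-length bound.
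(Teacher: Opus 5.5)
Your proof is correct. Part (i) is the paper's argument. The outer inclusion is the same operator-norm estimate; the paper writes it as $\langle Wx,a\rangle\le\|W\|_{\mathrm{op}}\|x\|_2\|a\|_2$ with $\|x\|_2\le\sqrt n$. The inner inclusion uses the same $\|v\|_2^2\le\|v\|_\infty\|v\|_1$ trick.

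For (ii) you take a genuinely different route. The paper invokes Ball's inequality $\vol(\sum_i a_i[-u_i,u_i])\ge 2^d\prod_i(a_i/c_i)^{c_i}$ for $\sum_i c_iu_iu_i^\top=I_d$. With $c_i=\|w_i\|_2^2$ this gives $\vol(Z)\ge 2^d\prod_i c_i^{-c_i/2}$. The paper then maximizes $\sum_i c_i\ln c_i$ over $\{c:\sum_i c_i=d,\ 0\le c_i\le 4d/n\}$ by a vertex argument, assuming $n/4$ is an integer. You instead use the exact formula $\vol(Z)=2^d\sum_{|S|=d}|\det W_S|$ together with Cauchy--Binet and Hadamard, bounding the sum below by $1/\max_S|\det W_S|$. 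This is more elementary and self-contained, and it avoids the optimization step and the divisibility assumption entirely.

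In fact, Ball's bound can itself be obtained from the same three ingredients. Write $\vol(Z)/2^d=\sum_S\det(W_S)^2\cdot|\det W_S|^{-1}$ as a weighted average with weights summing to $1$. Applying weighted AM--GM, the leverage identity $\sum_{S\ni i}\det(W_S)^2=\|w_i\|_2^2$, and Hadamard gives exactly $\prod_i c_i^{-c_i/2}$. You instead bound this average by its smallest term. The difference is that Ball's bound is an entropy-type average over all generator lengths. Yours is governed by the largest $d\times d$ minor, a worst-case quantity that can be much weaker when the lengths are very uneven. Under the uniform cap $\|w_i\|_2\le 2\sqrt{d/n}$ built into normalization, the two coincide at exactly $(n/d)^{d/2}$, so nothing is lost here.
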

\begin{proof}
    Let $W \in \mathbb{R}^{d \times n}$ be the generating matrix of $Z$. (i) We first argue about the inclusion $Z \subseteq \sqrt{n}\B^d_2$: after rescaling we may assume that $\|a\|_2=1$. Let $Wx$ with $x \in [-1,1]^n$ be the extreme point in $Z$ that maximizes the inner product with $a$. Then,
      \[
        h_Z(a) = \left<Wx,a\right> \leq \underbrace{\|W\|_{\textrm{op}}}_{=1} \cdot \|x\|_2 \cdot \underbrace{\|a\|_2}_{=1} 
        \leq \sqrt{n} \cdot \underbrace{\|x\|_{\infty}}_{\leq 1} \leq \sqrt{n}
      \]
    Here we use that $\|W\|_{\textrm{op}} = \|WW^{\top}\|_{\textrm{op}}^{\nicefrac{1}{2}} = 1$. For the other inclusion, we need to show that for every $a \in \S^{d-1}$, it holds that $\|W^\top a\|_1 \ge \nicefrac{1}{2}\sqrt{\nicefrac{n}{d}}$. We know that $\|W^\top a\|^2_2 = a^\top WW^\top a = \|a\|^2_2 = 1$ and that $\|W^\top a\|_\infty = \max_{i = 1}^n |\scalar{a, w_i}| \le \|a\|_2 \cdot \max_{i = 1}^n \|w_i\|_2 \le 2\sqrt{\nicefrac{d}{n}}$. 
    Hence, by Hölder's inequality, i.e., $\|v\|^2_2 \le \|v\|_\infty \cdot \|v\|_1$,
    \[
        \|W^\top a\|_1 \ge \frac{\|W^\top a\|^2_2}{\|W^\top a\|_\infty} \ge \frac{1}{2}\sqrt{\frac{n}{d}},
    \]
    which concludes the proof of (i).

    For the proof of (ii), we recall \cite[Lemma 4]{ball1991shadows}, which states that for any sequence of nonnegative scalars $c_1, \ldots, c_n$, $a_1, \ldots, a_n$, and unit vectors $u_1, \ldots, u_n$ satisfying $\sum_{i=1}^n c_i u_i u_i^\top = I_d$, every zonotope that can be expressed as $Z = \sum_{i=1}^n a_i[-u_i, u_i]$ satisfies
    \begin{align}\label{eq:vol-lb-zono}
        \vol(Z) \ge 2^d \prod_{i=1}^n \left(\frac{a_i}{c_i}\right)^{c_i}.
    \end{align}
    Suppose that $w_i$ are the generators of $Z$, then let us choose $a_i = \sqrt{c_i} = \|w_i\|_2$, and $u_i = \nicefrac{w_i}{\|w_i\|_2}$ so that
    \[
        Z = \sum_{i=1}^n [-w_i,w_i] = \sum_{i=1}^n \|w_i\|_2 \, \left[-\frac{w_i}{\|w_i\|_2},\frac{w_i}{\|w_i\|_2}\right] = \sum_{i=1}^n a_i[-u_i,u_i],
    \]
    and also $\sum_{i=1}^n c_iu_iu_i^\top = \sum_{i=1}^n w_iw_i^\top = WW^\top = I_d$, since $Z$ is normalized. We can apply \eqref{eq:vol-lb-zono} and obtain
    \begin{align*}
        \vol(Z) \ge 2^d \prod_{i=1}^n \left(\frac{1}{\sqrt{c_i}}\right)^{c_i} = 2^d \exp{-\frac{1}{2}\sum_{i=1}^n c_i\ln{c_i}},
    \end{align*}
    which we seek to minimize in terms of the $c_i$'s under the constraints $WW^\top = I_d$ and $\|w_i\|_2 \le 2\sqrt{\nicefrac{d}{n}}$ which are again given by normalization. The minimizer $c^\star = (c^\star_i)_{i=1}^n$ of the above lower bound on volume subject to the mentioned constraints is the same as the maximizer $c^\star = (c^\star_i)_{i=1}^n$ of the next expression:
    \[
        \max\left\{\sum_{i=1}^n c_i\ln{c_i} \mid \sum_{i=1}^n c_i = d, 0 \le c_i \le \frac{4d}{n} \ \forall i \in \{1, \ldots, n\}\right\}.
    \]
    Since the objective function is convex and we need to maximize it over a convex polytope, then we know that the maximizer $c^\star$ lies at one of the vertices of said polytope. This means that we need to set as many variables as possible to $\nicefrac{4d}{n}$. Without loss of generality, assume that $\nicefrac{n}{4} \in \N$, and the optimizer $c^\star$ is such that $c^\star_i = \nicefrac{4d}{n}$ for all $i \le \nicefrac{n}{4}$, and $c^\star_i = 0$ otherwise. Then, this implies
    \begin{align*}
        \vol(Z) \ge 2^d \left(\frac{n}{4d}\right)^{d/2} = \left(\frac{n}{d}\right)^{d/2},
    \end{align*}
    which concludes the proof of (ii).
\end{proof}

In virtue of \Cref{lem:normalization} and \Cref{lem:SupportFunctionOfNormalizedZonotope}, we have the following sharp estimate of the mean width of a zonotope:
\begin{corollary}
    For any zonotope $\tilde Z\subseteq \R^d$ there exists an invertible linear transformation $T$ such that $\nicefrac{2}{5}\sqrt{\nicefrac{n}{d}} \, \B_2^d \subseteq T(\tilde Z) \subseteq \sqrt{n} \, \B_2^d$ and
    \[
        \frac{8}{5\sqrt{2\pi e}} \, \sqrt{n} \le \mathbf w(T(\tilde Z)) \le \sqrt{n}.
    \]
\end{corollary}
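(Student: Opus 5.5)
The plan is to take $T$ and the normalized zonotope $Z$ from \Cref{lem:normalization}, so that $\frac45 Z\subseteq T(\tilde Z)\subseteq Z$, and then transfer the bounds of \Cref{lem:SupportFunctionOfNormalizedZonotope} to $T(\tilde Z)$. Here $n$ is the number of generators of $Z$.

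\emph{The two ball inclusions.} By \Cref{lem:SupportFunctionOfNormalizedZonotope}(i),
\[
T(\tilde Z)\subseteq Z\subseteq \sqrt n\,\B_2^d
\qquad\text{and}\qquad
T(\tilde Z)\supseteq \tfrac45 Z\supseteq \tfrac45\cdot\tfrac12\sqrt{\tfrac nd}\,\B_2^d=\tfrac25\sqrt{\tfrac nd}\,\B_2^d .
\]
This gives the first claim directly.

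\emph{Upper bound on the mean width.} Mean width is monotone under inclusion. Every support value satisfies $h_{T(\tilde Z)}(u)\le h_Z(u)\le\sqrt n$ for $u\in\S^{d-1}$. Averaging over the sphere gives $\mathbf w(T(\tilde Z))\le \mathbf w(Z)\le\sqrt n$. (This is clean if $\mathbf w$ is normalized as the spherical average of $h$. If the paper's convention averages $h(u)+h(-u)$, one instead uses the slightly sharper estimate $h_Z(u)\le \|W^\top u\|_1$ with $\|W^\top u\|_2=1$ and $\|x\|_\infty\le 1$ to control the average. I would fix the convention so that it matches the stated constants.)

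\emph{Lower bound on the mean width.} This is the only nontrivial step. I would use Urysohn's inequality, which says that among convex bodies of given volume the ball minimizes mean width:
\[
\mathbf w(K)\ \ge\ \mathbf w(\B_2^d)\Bigl(\frac{\vol(K)}{\vol(\B_2^d)}\Bigr)^{1/d}.
\]
Apply it to $K=\frac45 Z\subseteq T(\tilde Z)$ and combine with \Cref{lem:SupportFunctionOfNormalizedZonotope}(ii), $\vol(Z)\ge (n/d)^{d/2}$. Also use the Stirling bound
\[
\vol(\B_2^d)=\frac{\pi^{d/2}}{\Gamma(d/2+1)}\le\Bigl(\frac{2\pi e}{d}\Bigr)^{d/2}.
\]
Together these give
\[
\Bigl(\frac{\vol(Z)}{\vol(\B_2^d)}\Bigr)^{1/d}\ \ge\ \sqrt{\frac nd}\cdot\sqrt{\frac{d}{2\pi e}}=\sqrt{\frac{n}{2\pi e}}.
\]
Multiplying by $\frac45$ and by $\mathbf w(\B_2^d)$ (which equals $2$ in the full-width convention) yields $\frac{8}{5\sqrt{2\pi e}}\sqrt n$.

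The main obstacle is bookkeeping rather than mathematics. One must keep the normalization of $\mathbf w$ consistent between the Urysohn step, where the factor $2=\mathbf w(\B_2^d)$ produces the constant $8/5$, and the upper bound. One must also verify the elementary Stirling estimate for $\vol(\B_2^d)$ for all $d$.
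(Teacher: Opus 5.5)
Your route is the paper's: take $T$ and the normalized $Z$ from \Cref{lem:normalization}, and read off both ball inclusions from \Cref{lem:SupportFunctionOfNormalizedZonotope}(i). The upper bound on the mean width comes from monotonicity. The lower bound comes from Urysohn's inequality together with $\vol(T(\tilde Z))\ge (4/5)^d\vol(Z)$, \Cref{lem:SupportFunctionOfNormalizedZonotope}(ii) and $\vol(\B_2^d)\le (2\pi e/d)^{d/2}$. The ball inclusions and the lower bound $\frac{8}{5\sqrt{2\pi e}}\sqrt n$ (using $\mathbf w(\B_2^d)=2$) are correct as you wrote them.

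The convention problem you flagged is a real defect in the stated constants, not bookkeeping, and your suggested remedy does not fix it. The paper defines $\mathbf w(K)=\mathbb{E}_u[h_K(u)+h_K(-u)]$. This is the convention that gives $\mathbf w(\B_2^d)=2$ and hence the constant $8/5$. In that convention, $T(\tilde Z)\subseteq\sqrt n\,\B_2^d$ only gives $\mathbf w(T(\tilde Z))\le 2\sqrt n$.

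A sharper computation with $h_Z(u)=\|W^\top u\|_1$ cannot bring this down to $\sqrt n$. Your argument only uses $\frac45 Z\subseteq T(\tilde Z)\subseteq Z$ with $Z$ normalized. Take $\tilde Z=Z=[-1,1]^d$ with $W=I_d$, $n=d$ and $T=I$: this pair is admissible, yet $\mathbf w(Z)=2d\,\mathbb{E}|u_1|\ge 2\sqrt{2d/\pi}>\sqrt d$.

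So no single convention makes both stated constants follow from this argument. In the paper's convention you get $\frac{8}{5\sqrt{2\pi e}}\sqrt n\le \mathbf w\le 2\sqrt n$. If $\mathbf w$ is the spherical average of $h$, you get $\frac{4}{5\sqrt{2\pi e}}\sqrt n\le\mathbf w\le\sqrt n$.

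The paper's own proof has exactly this factor-$2$ slip: it asserts the upper bound $\sqrt n$ ``directly'' from $T(\tilde Z)\subseteq\sqrt n\,\B_2^d$ under the full-width definition. Your proposal therefore reproduces the paper's error rather than introducing a new one, and the conclusion $\mathbf w(T(\tilde Z))=\Theta(\sqrt n)$ is unaffected.
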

\begin{proof}
    First, recall that the mean width a convex body $K$ is $\mathbf w(K) := \E{u \sim \S^{d-1}}{h_K(u) + h_K(-u)}$. Now, by \Cref{lem:normalization}, we know that there exists an invertible linear transformation $T$ and a normalized zonotope $Z$ such that 
    \[
        \frac{4}{5} Z \subseteq T(\tilde Z) \subseteq Z.
    \]
    By \Cref{lem:SupportFunctionOfNormalizedZonotope} (i), we also have that
    \[
        \frac{1}{2}\sqrt{\frac{n}{d}} \, \B_2^d \subseteq Z \subseteq \sqrt{n} \, \B_2^d.
    \]
   The upper bound on the mean-width follows directly from the containment relation $T(\tilde Z) \subseteq Z \subseteq \sqrt{n} \, \B_2^d$. For the lower bound, Urysohn's Inequality \cite[Theorem 1.28]{Rothvoss21} says that
    \[
        \mathbf w(T(\tilde Z)) \ge 2 \left(\frac{\vol(T(\tilde Z))}{\vol(\B^d_2)}\right)^{1/d} \ge \frac{8}{5} \left(\frac{\vol(Z)}{\vol(\B^d_2)}\right)^{1/d} \ge \frac{8}{5\sqrt{2\pi e}} \, \sqrt{n},
    \]
    where the second inequality follows by $\nicefrac{4}{5} Z \subseteq T(\tilde Z)$ and the third by $\vol(\B^d_2) \le (\nicefrac{2\pi e}{d})^{d/2}$ and \Cref{lem:SupportFunctionOfNormalizedZonotope} (ii).
\end{proof}

\begin{theorem} \label{thm:appr_zono_by_poly_error}
    Let $Z \subseteq\R^d$ be a zonotope and let $P= \conv\{x_1,...,x_N\} \subseteq Z \subseteq sP$ be a polytope on $N = \poly(d)$ vertices approximating $Z$ up to a factor $s=s(d)$. Then, 
    \begin{align*}
        s \in \Omega\left(\sqrt{\frac{d}{\log d}}\right).
    \end{align*}
\end{theorem}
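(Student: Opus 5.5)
Since the statement is invariant under invertible linear maps, I first apply \Cref{lem:normalization} and reduce to the normalized case. If $P \subseteq \tilde Z \subseteq sP$, then $T(P) \subseteq T(\tilde Z) \subseteq Z$ and $Z \subseteq \frac54 T(\tilde Z) \subseteq \frac54 s\,T(P)$. So $P' := T(P)$ has $N$ vertices, satisfies $P' \subseteq Z \subseteq \frac54 s P'$, and we may assume $Z$ is normalized. We may also assume $n \ge d$, since otherwise $Z$ is not full-dimensional after the preprocessing. By \Cref{lem:SupportFunctionOfNormalizedZonotope}, every vertex $x_j$ of $P'$ then satisfies $\|x_j\|_2 \le \sqrt{n}$, and $\vol(Z) \ge (n/d)^{d/2}$.

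The core is the standard volumetric bound for polytopes with few vertices inside a Euclidean ball (Carl--Pajor / Bárány--Füredi / Gluskin). For $x_1,\dots,x_N \in R\,\B_2^d$ with $N \ge d$, one has
\[
\vol\bigl(\conv\{x_1,\dots,x_N\}\bigr)^{1/d} \le C\,R\,\frac{\sqrt{\log(N/d)+1}}{d}\cdot\sqrt{d}\,\vol(\B_2^d)^{1/d}.
\]
Equivalently, $\vol(P)^{1/d} \lesssim R\sqrt{\log(N/d)}/d$. I would either cite this or derive it via the mean width: $\mathbf w(P) \lesssim R\sqrt{\log N/d}$, because $h_P(u) = \max_j \langle x_j,u\rangle$ is a maximum of $N$ subgaussian variables with parameter $R/\sqrt d$. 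Urysohn's inequality then gives $\vol(P)^{1/d} \le \tfrac12 \mathbf w(P)\,\vol(\B_2^d)^{1/d}$. With $R=\sqrt n$ and $N=\poly(d)$ this yields
\[
\vol(P')^{1/d} \lesssim \sqrt{n}\,\sqrt{\tfrac{\log d}{d}}\cdot\tfrac{1}{\sqrt d}.
\]

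Comparing volumes then finishes the argument. From $Z \subseteq \frac54 s P'$ we get $\vol(Z)^{1/d} \le \frac54 s\,\vol(P')^{1/d}$, hence
\[
\sqrt{n/d} \lesssim s\,\frac{\sqrt{n\log d}}{d}, \quad\text{so}\quad s \gtrsim \sqrt{d/\log d}.
\]
The factor $\sqrt n$ cancels, and this cancellation is exactly what makes the bound universal: normalization places every zonotope so that its outer radius $\sqrt n$ and its volume radius $\sqrt{n/d}$ differ by only $\sqrt d$, as for a cube.

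The main obstacle is making the volume estimate for $P'$ rigorous with the right logarithm. The mean-width route is the cleanest. By rotation invariance, $\langle x_j,u\rangle$ for uniform $u\in\S^{d-1}$ has tail $\exp(-d t^2/(2R^2))$, and a union bound over the $N$ vertices gives $\E{}{\max_j \langle x_j,u\rangle} \le C R\sqrt{\log N/d}$. Urysohn's inequality (as used in the preceding corollary) then converts this mean-width bound into the volume bound. I would double-check the constants only up to absolute factors, since the claim is asymptotic.
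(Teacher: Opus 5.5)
Your proposal is correct and follows the paper's proof. You reduce to a normalized zonotope via \Cref{lem:normalization}, use $Z \subseteq \sqrt{n}\,\B_2^d$ and $\vol(Z) \ge (n/d)^{d/2}$ from \Cref{lem:SupportFunctionOfNormalizedZonotope}, bound the volume of an $N$-vertex polytope inside $\sqrt{n}\,\B_2^d$, and compare volumes so that $\sqrt{n}$ cancels. Your reduction (mapping $P$ forward by $T$ at a cost of $\frac54$) is stated in the right direction more carefully than the paper's, and your mean-width plus Urysohn derivation of the polytope volume bound is a valid self-contained substitute for the cited estimate: it gives $\sqrt{\log N}$ instead of $\sqrt{\log(N/d)}$, which makes no difference when $N=\poly(d)$.
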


\begin{proof}
    The proof follows a similar line of thought as \cite{khot2007linear}. First, note that we can assume without loss of generality that $Z$ is normalized, since approximating the normalized zonotope $Z'$ satisfying $\nicefrac{4}{5}Z' \subseteq T(Z) \subseteq Z'$ by a polytope $P'$ and considering $P=T^{-1}(P')$ gives us the desired approximation of $Z$. 
    \par
    Let $P$ be a polytope with $N=\poly(d)$ vertices such that $P \subseteq Z \subseteq sP$. By \cref{lem:SupportFunctionOfNormalizedZonotope} we know that $(\sqrt{\nicefrac{n}{4d}}) \, \B_2^d \subseteq Z \subseteq \sqrt{n} \, \B_2^d$ and therefore, $(\nicefrac{1}{\sqrt{n}})P \subseteq \B_2^d$. It is well-known (see e.g. \citep{khot2007linear, Simonovits03, gluskin1989extremal}) that the volume ratio of a polytope $(\nicefrac{1}{\sqrt{n})}P$ that is fully contained in $\B_2^d$ is upper bounded by
    \begin{align*}
        \left(\frac{\vol(\frac{1}{\sqrt{n}}P)}{\vol(\B_2^d)}\right)^{1/d} \leq O\left( \sqrt{\frac{\log(\frac{N}{d})}{d}}\right) \leq O\left(\sqrt{\frac{\log d}{d}}\right),
    \end{align*} where the second inequality follows since $N = \poly (d)$. With this at hand, we can apply \cref{lem:SupportFunctionOfNormalizedZonotope}(ii) together with the fact that $\vol(\B^d_2) \le (\nicefrac{2\pi e}{d})^{d/2}$ to find
    \begin{align*}
         \left(\frac{\vol(P)}{\vol(Z)}\right)^{1/d} &=  \left(\frac{n^{d/2}\vol(\frac{1}{\sqrt{n}}P)}{\vol(Z)}\right)^{1/d} = \left(\frac{n^{d/2}\vol(\B_2^d)}{\vol(Z)}\right)^{1/d} \left(\frac{\vol(\frac{1}{\sqrt{n}}P)}{\vol(\B^2_d)}\right)^{1/d}\\
         & \leq \left(\frac{n^{d/2}\vol(\B_2^d)}{(n/d)^{d/2}}\right)^{1/d} \cdot O\left(\sqrt{\frac{\log d}{d}}\right) = \left({d^{d/2}\vol(\B_2^d)}\right)^{1/d} \cdot O\left(\sqrt{\frac{\log d}{d}}\right) \\
         &\leq O\left(\sqrt{\frac{\log d}{d}}\right).
    \end{align*}
    To finish up, note that since $P \subseteq Z \subseteq sP$ we also have that $\vol(P) \leq \vol(Z) \leq s^d \vol(P)$ and thus with the above we get
    \[
        s \geq \left(\frac{\vol(Z)}{\vol (P)}\right) ^{1/d} \geq \Omega\left(\sqrt{\frac{\ d}{\log d}}\right). \qedhere
    \]
\end{proof}

\begin{remark}
    Note that under \Cref{conj:zonoid-sparsification}, the above lower bound is tight for all zonotopes since then, all zonotopes can be approximated up to a factor of $O(\sqrt{\nicefrac{d}{\log d}})$ by polynomially many vertices. In turn,  this would imply that approximating the volume of $Z$ by a polytope $P \subseteq Z$ is exactly as difficult as approximating $Z$ by a polytope $P \subseteq Z \subseteq sP$. 
    On the contrapositive, this would imply that if there exists a zonotope $Z$ which is more difficult to approximate in a containment sense than in a volume sense, \Cref{conj:zonoid-sparsification} cannot be true.
\end{remark}

\begin{remark}
    We remark that, for every isotropic convex body $K \subseteq \R^d$ and every $\Omega(d) \le N \le O(\exp{d})$, the polytope $P = \conv(\pm x_1, \ldots, \pm x_N)$, obtained by sampling $x_i \sim K$ uniformly and independently has the following volumetric approximation guarantee in expectation:
    \[
        \left(\frac{\vol(P)}{\vol(K)}\right)^{1/d} \in \Theta\left(L_K \sqrt{\frac{\log\left(\frac{N}{d}\right)}{d}}\right) = \Theta\left(\sqrt{\frac{\log\left(\frac{N}{d}\right)}{d}}\right),
    \]
    where $L_K:= \frac{\det(\mathrm{Cov}(K))^{1/2d}}{\vol(K)^{1/d}}$ is the isotropic constant of $K$. The inclusion follows from the work of \citet{dafnis2013quermassintegrals}[Equations (1.7)-(1.8)]. The equality follows from the recent breakthrough affirmative resolution of Bourgain's Slicing Conjecture \cite{bourgain1986geometry, bourgain1986high} by \citet{guan2024note} and \citet{klartag2025affirmative}, which asserts that $L_K \in \Theta(1)$ for all isotropic convex bodies $K$.

    If we plug in $N \in \poly(d)$ and recall \Cref{thm:appr_zono_by_poly_error}, we observe that, with a polynomial number of vertices, this approximation error is best possible, and thus tight, for zonotopes. This means that an optimal polytope that approximates a zonotope in a volumetric sense can be found by sampling points from it uniformly at random. Note that this is false for general convex bodies, for example the cross-polytope $\B_1^d$, where uniform sampling achieves the same $\Theta(\sqrt{\nicefrac{\log d}{d}})$ guarantee which is clearly not optimal since $\B_1^d$ is itself a polytope on $2d$ vertices.
\end{remark}

\section{Containment of General Convex Bodies}\label{sec:general}

We now consider  $(K,Q)\textsc{-Gap-Containment}$  for general convex bodies $K,Q ⊆ ℝ^d$. 
For the general convex body containment problem in the oracle model, we establish in this section that the optimal approximation guarantee is tightly bounded by $s \in \Theta(\nicefrac{d}{\log d})$. Specifically, a result by \citet{Naszodi2019approximating} shows that, given a polynomial number of samples, one can approximate the containment problem within a factor $s \in O(\nicefrac{d}{\log d})$ with high probability. Our main contribution in this section is to prove a matching lower bound: any algorithm that accesses the oracle through a polynomial number of queries must incur an approximation error of at least $s \in \Omega(\nicefrac{d}{\log d})$ with at least constant probability. This establishes the tightness of the approximation in this oracle setting and further indicates that certain existential results on polytopal approximations of convex bodies cannot be made algorithmically feasible.

To achieve the desired approximation guarantee for containment in general convex bodies, we first recall a result of \citet{Naszodi2019approximating} regarding polytopal approximation of convex bodies.

\begin{theorem}[Theorem 1.2 in \citep{Naszodi2019approximating}]\label{thm:polytope-general-convex}
    Given a centered convex body $K \subseteq\R^d$, sampling points $x_1, \ldots, x_T$ independently and uniformly from $K$, where $T \in \Theta\left(d\left(1-\nicefrac{1}{s}\right)^{-d}\log\left(\left(1-\nicefrac{1}{s}\right)^{-1}\right)\right)$, yields that the convex hull $P = \mathrm{conv}(x_1, \ldots, x_T)$ satisfies $P \subseteq K \subseteq sP$ with high probability.
\end{theorem}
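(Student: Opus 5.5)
We need to show that $T$ uniform samples from $K$ give $K \subseteq sP$ with high probability; $P\subseteq K$ is automatic by convexity. The plan is a covering argument by caps. Set $\lambda = 1-\nicefrac{1}{s}$. Since $K$ is centered and $P$ is a polytope, $K \not\subseteq sP$ means some facet hyperplane of $sP$ cuts off a point of $K$. Equivalently, there is a direction $u$ with $h_{sP}(u) < h_K(u)$, i.e.\ $h_P(u) < h_K(u)/s$.

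The key observation is about the cap $C(u) := \{x \in K : \langle u, x\rangle \ge h_K(u)/s\}$ when this failure occurs. Let $z \in K$ attain $h_K(u)$. The homothetic copy $\lambda K + \nicefrac{1}{s}\, z$ lies in $K$ by convexity. Every point $\lambda y + z/s$ of this copy satisfies $\langle u, \lambda y + z/s\rangle \ge \lambda \min_K \langle u,\cdot\rangle + h_K(u)/s$. Because the minimum may be negative, I would shrink toward $z$ less aggressively. Take $(1-\mu)z + \mu K$ with $\mu$ chosen so that the whole copy has $\langle u,\cdot\rangle \ge h_K(u)/s$. For a centered body (here I would use that the centroid is at the origin, so $\min_K\langle u,\cdot\rangle \ge -d\, h_K(u)$ by Grünbaum-type bounds), this forces $\mu \approx \lambda/(d+1)$. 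That is too lossy, so instead I would follow Naszódi's route: the \emph{failure} event is that no sample lies in $C(u)$, and $\vol(C(u))/\vol(K)$ is bounded below using the homothety centered at $z$ with ratio $\lambda$, intersected with the halfspace. The cleaner version uses the floating-body style estimate: by Brunn–Minkowski applied along the direction $u$, the slab-volume function $t \mapsto \vol(K\cap\{\langle u,x\rangle=t\})^{1/(d-1)}$ is concave. From this, the fraction of volume beyond level $h_K(u)/s$ is at least of order $(1-\nicefrac1s)^d$ up to the centering correction.

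The probabilistic step replaces the infinitely many directions $u$ by a finite family of caps. I would use an $\varepsilon$-net / VC-dimension argument: halfspaces in $\R^d$ have VC-dimension $d+1$. Hence with $T \gtrsim \frac{d}{p}\log\frac1p$ samples, where $p$ is the minimum cap fraction, every halfspace containing at least a $p$-fraction of $K$'s volume contains a sample, with high probability. With $p \approx (1-\nicefrac1s)^d$, this gives exactly $T \in \Theta\bigl(d(1-\nicefrac1s)^{-d}\log((1-\nicefrac1s)^{-1})\bigr)$. Then no failing direction exists: each $C(u)$ contains some $x_i$, so $h_P(u) \ge h_K(u)/s$ for all $u$, i.e.\ $K \subseteq sP$.

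The main obstacle is the cap-volume lower bound $\vol(C(u)) \ge c\,(1-\nicefrac1s)^d \vol(K)$ uniformly in $u$ with only the centering hypothesis. I would first try the homothety $z + \lambda(K - z)$ combined with the fact that the origin (centroid) lies in $K$. This copy contains $z+\lambda(0-z) = \nicefrac{1}{s}\,z$, which lies on the cap boundary level. Then I would argue via Grünbaum's inequality that at least a constant fraction (about $1/e$) of this copy lies on the far side of the hyperplane through its image of the centroid. That yields $p \ge e^{-1}\lambda^d$, where the constant is absorbed into $\Theta$.
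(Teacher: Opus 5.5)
Your final argument is correct and is essentially the proof in Nasz\'odi's paper, which the present paper only cites. The homothet $z+(1-\nicefrac{1}{s})(K-z)\subseteq K$ has its centroid $z/s$ on the bounding hyperplane of $C(u)$, so Gr\"unbaum's inequality gives $\vol(C(u))\ge e^{-1}(1-\nicefrac{1}{s})^d\vol(K)$, and the $\varepsilon$-net theorem for halfspaces (VC-dimension $d+1$) then places a sample in every such cap; this step genuinely needs the centroid at the origin, not merely some interior point, which is exactly what you assumed. The only inaccuracy is your claim that this yields the quoted $T$ ``exactly'': with $p=e^{-1}(1-\nicefrac{1}{s})^d$ you get $T=O\bigl(\tfrac{d}{p}\log\tfrac{1}{p}\bigr)$ with $\log\tfrac{1}{p}=1+d\log\bigl((1-\nicefrac{1}{s})^{-1}\bigr)$, which exceeds the stated formula by a factor of at least $d$ (harmless, since it is still $\poly(d)$ for $s=\Theta(\nicefrac{d}{\log d})$).
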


The containment guarantee now directly follows by  approximating $K$ by $P$ in the sense of \Cref{thm:polytope-general-convex} with $s(d) \in \Theta(\nicefrac{d}{\log d})$. The random samples can be found in polynomial time~\cite{DyerFK91} and the number of samples are polynomial in $d$.  Then one checks whether at least one of the  vertices of $s(d) ⋅ P$ lies outside $Q$.
If this is the case, then certainly $sK \nsubseteq Q$. Otherwise, one has $K \subseteq Q$.

\begin{corollary}
\label{cor:sampling-general-convex}
    For  convex bodies $K,Q \subseteq\R^d$  and $s ∈ \Theta (\nicefrac{d}{\log d})$, $(K,Q)$\textup{\textsc{-Gap-Containment}} can be solved by a randomized polynomial time algorithm with high probability.
\end{corollary}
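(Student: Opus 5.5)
The plan is to run Naszódi's polytope approximation (\Cref{thm:polytope-general-convex}) once and then test the vertices of the scaled polytope against the membership oracle for $Q$.

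\emph{Choice of parameters.} Set $s = c\, d/\log d$ for a suitable constant $c$. Then
\[
\left(1-\tfrac{1}{s}\right)^{-d} = \exp\!\left(\Theta\!\left(\tfrac{d}{s}\right)\right) = \exp\!\left(\Theta\!\left(\tfrac{\log d}{c}\right)\right) = d^{O(1/c)},
\]
and $\log\left((1-1/s)^{-1}\right) = \Theta(1/s)$. Hence $T = \Theta\left(d \cdot d^{O(1/c)}/s\right)$, which is polynomial in $d$; this is the only place where the choice $s \in \Theta(d/\log d)$ matters. By \Cref{thm:polytope-general-convex}, sampling $x_1,\ldots,x_T$ independently and uniformly from $K$ gives $P=\conv(x_1,\ldots,x_T)$ with $P\subseteq K\subseteq sP$ with high probability.

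\emph{The algorithm and its correctness.} Query the membership oracle of $Q$ on each point $s x_i$.
\begin{enumerate}
\item If some $s x_i \notin Q$, answer \textsc{No}. This is always correct: $x_i\in K$ gives $s x_i \in sK$, so $sK\nsubseteq Q$, and this certificate does not depend on the random event.
\item If all $s x_i\in Q$, answer \textsc{Yes}. By convexity of $Q$, $sP\subseteq Q$, and on the high-probability event we get $K\subseteq sP\subseteq Q$, so the instance is not a \textsc{No} instance.
\end{enumerate}
Since a correct algorithm may give either answer when both conditions hold, the algorithm errs only on the low-probability event that $K\nsubseteq sP$.

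\emph{Sampling step.} We need approximately uniform samples from $K$ in polynomial time. Because $K$ is well-rounded with $R/r\in\poly(d)$, the samplers of \citet{DyerFK91,LovaszS93,CousinsV18} provide these. The main obstacle is that these samplers are only approximately uniform, while \Cref{thm:polytope-general-convex} assumes exact uniformity. I would handle this by coupling: take the total-variation error per sample to be $1/(dT)^2$, so the joint law of the $T$ samples is within $o(1)$ of the exact product distribution. The high-probability guarantee then transfers with only an $o(1)$ loss, and the running time stays polynomial.
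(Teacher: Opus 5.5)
Your proposal is correct and follows the paper's argument: with $s=\Theta(d/\log d)$ the sample size in Nasz\'{o}di's theorem is polynomial, and testing the points $s x_i$ against the membership oracle of $Q$ gives a sound \textsc{No} certificate, while \textsc{Yes} is justified by $K\subseteq sP\subseteq Q$ on the high-probability event. Your explicit bound on $T$ and the total-variation coupling for approximately uniform samplers fill in details the paper leaves to its citation of Dyer--Frieze--Kannan; one assumption that you and the paper both use tacitly is that Nasz\'{o}di's theorem needs the centroid of $K$ at the origin, not just the origin in the interior (if the origin sits very close to a vertex of a simplex, uniform samples miss the thin cap toward that vertex and $K\subseteq sP$ fails).
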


\subsection{A Matching Lower Bound}
\label{sec:matching-lower-bound}
      
We now prove that the $\Omega(\nicefrac{d}{\log d})$ scaling for the containment problem of general oracle-access convex bodies is tight. 

\begin{theorem}\label{thm:hardness-general-cvx-bodies}
    For $s ∈ o (\nicefrac{d}{\log d})$, there does not exist a randomized polynomial time algorithm that decides $(K,Q)$\textup{\textsc{-Gap-Containment}} for each $K, Q \subseteq \R^d$ with high probability.
\end{theorem}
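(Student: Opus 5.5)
We reduce from circumradius approximation, as announced in the abstract. The classical oracle lower bound of Brieden et al.\ (building on Bárány--Füredi) says the following: for a symmetric convex body $Q$ given by a membership oracle, no randomized polynomial-time algorithm can approximate the Euclidean circumradius $R(Q)=\min\{R : Q\subseteq R\B_2^d\}$ better than a factor $\Omega(\sqrt{d/\log d})$. Equivalently, no such algorithm can approximate the inradius of the polar $Q^\circ$ to that factor. Our plan is to obtain an $\Omega(d/\log d)$ gap by making the circumradius-type difficulty appear on both sides of the containment question, so that the two $\sqrt{d/\log d}$ losses multiply.

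\textbf{Adversary construction.} We use the standard indistinguishability argument. Fix $K$ and $Q$ as oracle bodies depending on a hidden random unit vector $v\in\S^{d-1}$, and compare against a ``symmetric'' instance.
\begin{itemize}
\item For the inner body, take $K_v=\conv\bigl(\B_1$-like small body$\cup\{\pm v\}\bigr)$. Concretely, $K_v=\conv(\sqrt{\log d/d}\,\B_2^d\cup\{\pm v\})$, and the symmetric alternative is $K_0=\sqrt{\log d/d}\,\B_2^d$.
\item For the outer body, take $Q_v=\{x: |v^\top x|\le \sqrt{\log d/d}\}\cap \B_2^d$, with the symmetric alternative $Q_0=\B_2^d$.
\end{itemize}
By concentration on the sphere, any fixed query point $x$ with $\|x\|_2\le 1$ satisfies $|v^\top x|\le c\sqrt{\log d/d}\,\|x\|_2$ with probability $1-d^{-\omega(1)}$ over $v$. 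Hence, after a union bound over polynomially many adaptive queries, the membership oracles of $K_v$ versus $K_0$ and of $Q_v$ versus $Q_0$ answer identically, with high probability, on every query the algorithm makes (after adjusting constants in the scalings).

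We then compute the optimal containment values:
\begin{itemize}
\item For the pair $(K_0,Q_0)$, we have $\alpha^\star=\sqrt{d/\log d}$.
\item For the pair $(K_v,Q_v)$, the point $v\in K_v$ must be scaled into the slab, which forces $\alpha^\star\le\sqrt{\log d/d}$.
\end{itemize}
After normalizing, the ratio between the two instances is $d/\log d$. Since the algorithm cannot tell them apart, it cannot separate \textsc{Yes} from \textsc{No} for any $s=o(d/\log d)$. Both bodies are centrally symmetric, which yields the Barvinok consequence.

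\textbf{Main obstacle.} The hard step is making the indistinguishability simultaneous and adaptive. The algorithm queries both oracles, and its queries to $Q$ may be informed by $K$-answers. We will argue by the standard coupling: run the algorithm on $(K_0,Q_0)$, which is independent of $v$. Its query sequence is then fixed given its internal randomness, and we bound the probability that any query lands where the two oracles differ. For $K$ this region lies near $\pm v$ outside the small ball; for $Q$ it is the region outside the slab inside the ball. Both events have probability at most $\exp(-\Omega(c^2\log d))$ per query, so choosing the constant $c$ large makes the union bound over $\poly(d)$ queries work.

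A second subtlety is the well-roundedness assumption $r\B_2^d\subseteq K\subseteq R\B_2^d$ with $R/r=\poly(d)$, and likewise for $Q$. This holds here because $R/r=\sqrt{d/\log d}$ for both bodies. We would then finish by translating the gap into the \textsc{Gap-Containment} formulation by rescaling $K$ so that one instance is a \textsc{Yes}-instance and the other violates containment even at factor $s$.
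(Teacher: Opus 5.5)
Your proof is correct and is essentially the paper's argument made explicit. With $\rho=c\sqrt{\log d/d}$ and the Brieden et al.\ spindle $P_v=\conv(\B_2^d\cup\{\pm v/\rho\})$, your pair is exactly $K_v=\rho P_v$, $Q_v=P_v^\circ$ (and $K_0=\rho\B_2^d$, $Q_0=(\B_2^d)^\circ$). This is the paper's choice $K=P$, $Q=P^\circ$, which squares the $\sqrt{d/\log d}$ circumradius gap into a $d/\log d$ containment gap.

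The only difference is that you re-derive the adaptive indistinguishability directly for both membership oracles, instead of citing Brieden et al.\ and the polar lemma. This has the small advantage of not needing to simulate a membership oracle for $P^\circ$ from one for $P$.

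One correction: for a fixed $c$ the per-query failure probability is $d^{-\Theta(c^2)}$, not $d^{-\omega(1)}$ as your first paragraph claims. So $c$ must be chosen according to the polynomial query bound, as you do later in the proposal.
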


To prove the above theorem, we first perform a reduction from hardness of computing radius of a general convex body represented by an oracle. Below, we denote by $\textsc{OutRad}_2(K)$ the $\ell_2$-circumradius of convex body $K \subseteq \R^d$.

\begin{lemma}\label{lem:hardness-radius-containment}
Fix $r > 1$ and suppose that, for a symmetric convex body $P\subseteq \R^d$, one cannot distinguish
\[
    \textup{\textsc{OutRad}}_2(P) \leq 1 \qquad \text{from} \qquad \textup{\textsc{OutRad}}_2(P) > r,
\]
in a polynomial number of oracle calls to the oracle. Then, there are symmetric convex bodies $K,Q \subseteq \R^d$ so that one cannot distinguish $K \subseteq Q$ from $K \not\subseteq r^2 Q$ in a polynomial number of oracle calls to the oracle.
\end{lemma}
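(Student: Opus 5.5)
The plan is a contrapositive reduction: I assume an algorithm $\mathcal A$ that, given symmetric convex bodies $K,Q$ by membership oracles, distinguishes $K\subseteq Q$ from $K\not\subseteq r^2Q$ with polynomially many queries. From it I build a procedure that distinguishes $\textsc{OutRad}_2(P)\le 1$ from $\textsc{OutRad}_2(P)>r$. The natural choice is to take one of the two bodies to be $P$ itself and the other a fixed, explicitly known symmetric body. Then every oracle query of $\mathcal A$ is either a query to $P$ or a query we answer ourselves, so the query count is preserved.

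The first attempt would be $K=P$ and $Q=\B_2^d$. This gives $P\subseteq \B_2^d$ iff $\textsc{OutRad}_2(P)\le 1$ for symmetric $P$, since the circumradius of a symmetric body is attained at the origin. However, it only produces a gap of $r$, not $r^2$. The squared factor suggests rescaling so the two cases sit symmetrically around the ball. Set $K=\B_2^d$ and $Q = rP$, say, or compose the ball at radius $r$.

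Concretely, I would define $K := P$ and $Q := r\,\B_2^d$ after rescaling $P$ by $r$. In the Yes case, $\textsc{OutRad}_2(P)\le 1$ gives $P\subseteq \B_2^d\subseteq r\B_2^d$. In the No case, $\textsc{OutRad}_2(P)>r$ gives a point $x\in P$ with $\|x\|_2>r$. To reach $r^2$, I would instead use $K:=rP$ and $Q:=r\B_2^d$. In the Yes case $rP\subseteq r\B_2^d$, so $K\subseteq Q$. In the No case $rP$ contains a point of norm $>r^2$, which lies outside $r^2\cdot\frac1r\cdot r\B_2^d$ once the scaling is checked. I expect to fiddle with the normalization so that Yes means $K\subseteq Q$ exactly and No means a point of $K$ outside $r^2Q$. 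One clean version is $K=P$, $Q=\frac1r\B_2^d\cdot r=\B_2^d$ together with the promise amplified. Alternatively, the intended proof may use the polar: taking $Q=P^\circ$ and $K=\B_2^d$, the inradius of $P^\circ$ is $1/\textsc{OutRad}_2(P)$. A membership oracle for $P^\circ$ is not directly available from one for $P$, though only up to polynomial-time conversions with small loss \citep{GrotschelLS84}. That loss may be the source of the $r^2$.

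The main obstacle is exactly this oracle bookkeeping. The algorithm $\mathcal A$ must be simulated using only membership queries to $P$, and well-roundedness of $K$ must be guaranteed for sampling. If using $P^\circ$, I would argue that each membership query to $P^\circ$ can be answered approximately by an optimization over $P$, at a loss of a factor $r$, giving the overall $r^2$. Finally, if $\mathcal A$ succeeded on the resulting pair, the simulated procedure would distinguish the two radius cases, contradicting the hypothesis.
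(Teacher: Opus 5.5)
\textbf{There is a genuine gap.} None of the reductions you actually write down produces the factor $r^2$. The one place where you claim it, the check for $K=rP$, $Q=r\B_2^d$, is wrong: $r^2Q=r^3\B_2^d$, and a point of $rP$ of norm just above $r^2$ lies inside it.

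More generally, take $K=cP$ and $Q=\rho\B_2^d$. The Yes case forces $c\le\rho$. Then any $P$ with $r<\textsc{OutRad}_2(P)\le r^2$ satisfies $K\subseteq r^2Q$, so it lies outside the promise of the containment problem and $\mathcal A$ tells you nothing about it. Rescaling cannot change the ratio $r$ between the two cases, and ``amplifying the promise'' has no justification. The variant $K=\B_2^d$, $Q=rP$ concerns the inradius of $P$ and is irrelevant here.

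The polar variant $K=\B_2^d$, $Q=P^\circ$ is just the dual form of your first attempt and again gives only the gap $r$. Knowing $\textsc{OutRad}_2(P)>r$ only yields $\B_2^d\not\subseteq rP^\circ$, not $\B_2^d\not\subseteq r^2P^\circ$.

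Your idea that a loss in simulating a membership oracle for $P^\circ$ could be the source of $r^2$ is backwards. Any loss in a reduction can only shrink the gap for which you obtain hardness. Here the lemma asserts hardness for the larger, easier gap $r^2$. Moreover, the Gr\"otschel--Lov\'asz--Schrijver conversion (weak validity over $P$, i.e.\ weak membership in $P^\circ$, from weak membership in $P$) costs only a factor $1+\epsilon$.

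\textbf{The missing idea} is to put $P$ on \emph{both} sides: $K=P$ and $Q=P^\circ$. For symmetric $P$ and $t>0$, the inclusion $P\subseteq tP^\circ$ means $\langle a,x\rangle\le t$ for all $a,x\in P$.
\begin{itemize}
\item Choosing $x=a$ gives $\|a\|_2^2\le t$.
\item Conversely, Cauchy--Schwarz gives $\langle a,x\rangle\le \textsc{OutRad}_2(P)^2$.
\end{itemize}
Hence $P\subseteq tP^\circ$ if and only if $\textsc{OutRad}_2(P)\le\sqrt t$. So $\textsc{OutRad}_2(P)\le 1$ gives $K\subseteq Q$, while $\textsc{OutRad}_2(P)>r$ gives $K\not\subseteq r^2Q$. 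The gap is squared precisely because $P$ enters once as the inner body and once, through the polar, as the outer body.

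This is the paper's proof. It shows $\textsc{OutRad}_2(P)\le r\iff P\subseteq r^2P^\circ$. One direction uses $P\subseteq r\B_2^d\subseteq r^2P^\circ$, since $P^\circ\supseteq\frac1r\B_2^d$. The other tests $a/r^2\in P^\circ$ against $a\in P$ itself.

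Your concern about answering membership queries for $P^\circ$ using queries to $P$ is legitimate, and the paper does not address it. It is routine via the oracle-polynomial equivalences of Gr\"otschel--Lov\'asz--Schrijver for well-bounded centered bodies, and it has nothing to do with the exponent.
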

\begin{proof}
  Given a symmetric convex body $P$, set $K = P$ and $Q = P^{\circ}$ be its polar body.
  We want to prove that 
  \[
    \textsc{OutRad}_2(P) \leq r \Longleftrightarrow K \subseteq r^2Q.
  \]
  We split this proof into two directions: first we argue that if $\textsc{OutRad}_2(P) \leq r$, then $K \subseteq r^2 Q$. Indeed, we simply have that $\textsc{OutRad}_2(P) \leq r$ is equivalent to saying that $P \subseteq r\B_2^d$, which in turn implies that
 \[
    P^{\circ} \supseteq (r\B_2^d)^{\circ} = \frac{1}{r}    \B_2^d.
 \]
 Therefore,
 \[
  K = P \subseteq r \B_2^d \subseteq r^2 P^{\circ} = r^2 Q
 \]
 which gives the first direction.
 
 For the other direction, we would like to show that if $K \subseteq r^2Q$, then $\textsc{OutRad}_2(P) \leq r$. To that end, consider a vector $a$ such that $a \in P = K \subseteq r^2 Q = r^2 P^{\circ}$. Then, it holds that $\nicefrac{a}{r^2} \in P^{\circ}$. By definition of the polar, this means that $\scalar{\nicefrac{a}{r^2},x} \leq 1$ is a feasible inequality for the primal body $P$. In particular, 
 \[
 \frac{\|a\|_2^2}{r^2} =  \scalar{\frac{a}{r^2},a} \stackrel{a \in P}{\leq} 1,
 \]
 which can be rearranged to $\|a\|_2 \leq r$.
 Hence if one could distinguish  $K \subseteq Q$ from $K \not\subseteq r^2Q$, then one could also
  distinguish $P \subseteq \B_2^d$ from $P \not\subseteq r \B_2^d$. 
\end{proof}

We now show  \Cref{thm:hardness-general-cvx-bodies}:

\begin{proof}[Proof of \Cref{thm:hardness-general-cvx-bodies}]
    In Section 3 of \cite{brieden1998approximation}, the authors show that there exists a distribution over centrally symmetric convex bodies $K \subseteq \R^d$ such that one cannot distinguish,
    \[
        \textup{\textsc{OutRad}}_2(K) \leq 1 \qquad \text{from} \qquad \textup{\textsc{OutRad}}_2(K) \in \Omega\left(\sqrt{\frac{d}{\log d}}\right),
    \]
    in a polynomial number of oracle calls to the oracle, with success probability at least constant. Hence, the theorem statement follows directly from  \Cref{lem:hardness-radius-containment}.
\end{proof}

\subsection{Implications for Approximation by Polytopes } 

We turn to the negative implications of \Cref{thm:hardness-general-cvx-bodies} for algorithmically approximating convex bodies with polytopes. Before discussing them, we first review key results on polytopal approximations of convex bodies. Beyond the algorithmic $O(\nicefrac{d}{\log d})$ approximation for centered convex bodies \citep{Naszodi2019approximating} (\Cref{thm:polytope-general-convex}), a recent information-theoretic lower bound establishes this factor as optimal \citep{huang2026hardness}. If we restrict our attention to centrally symmetric convex bodies, a renowned result of \citet{Barvinok14} shows that the convex hull of $\poly(d)$ points from $K$ approximates $K$ up to an $O(\sqrt{d})$ factor.
\begin{theorem}[\citet{Barvinok14}]\label{thm:barvinok}
    For any centrally symmetric convex body $K$, there exists a polytope $P$ with at most $\poly(d)$ many vertices such that $\nicefrac{1}{s}P \subseteq K \subseteq P$, for $s \in O(\sqrt{d})$.
\end{theorem}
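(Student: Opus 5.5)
The plan is to prove Barvinok's theorem by linearizing the norm of $K$ through tensor powers. By John's theorem, after a linear transformation I may assume $\B_2^d \subseteq K \subseteq \sqrt{d}\,\B_2^d$. This already gives the trivial bound $s=\sqrt d$, but with exponentially many vertices; the task is to get polynomially many. I work with the polar $K^\circ$ and the support function $h(a)=h_K(a)$. The key step is to approximate $h$ within a constant factor by a function of the form $p(a)^{1/(2k)}$, where $p(a)=\max_{y} \langle y, a^{\otimes 2k}\rangle$ is a linear functional in the lifted space $\mathrm{Sym}^{2k}(\R^d)$. Concretely, I would define the lifted convex hull
\[
\widehat K=\conv\{x^{\otimes 2k}: x\in K\}\subseteq \mathrm{Sym}^{2k}(\R^d),
\]
which lives in dimension $D=\binom{d+2k-1}{2k}$. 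Then $\max_{x\in K}\langle x,a\rangle^{2k}=h_{\widehat K}(a^{\otimes 2k})$, and taking $2k$-th roots recovers $h_K(a)$ exactly.

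I then apply John's theorem again, this time to the symmetrization $\widehat K\cup(-\widehat K)$ in dimension $D$. This yields an ellipsoid, equivalently a quadratic form $q$ on $\mathrm{Sym}^{2k}$, such that $q(z)^{1/2}\le h_{\widehat K}(z)\le \sqrt D\, q(z)^{1/2}$. Taking $2k$-th roots at $z=a^{\otimes 2k}$ gives
\[
h_K(a)\approx q(a^{\otimes 2k})^{1/(4k)}
\]
up to a factor $D^{1/(4k)}$. Choosing $k\approx d/\log d$ or similar, the factor $D^{1/(4k)}$ becomes $e^{O(\log(d/k))\cdot d/k\cdot (1/k)\cdot k}$. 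One should tune $k$ so that $D^{1/(4k)}\le C\sqrt{d/k}$, which gives approximation quality roughly $\sqrt{d/k}$. With $k$ constant and $D=d^{O(1)}$ this yields an $O(\sqrt d)$ factor.

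To get vertices, I would use Carathéodory-type sparsification of the lifted body. The ellipsoid from John's theorem is determined by $D^2$ contact points, and each contact point lies in $\widehat K$. Each contact point is therefore a convex combination of at most $D+1$ points $x^{\otimes 2k}$ with $x\in K$, and I collect all these $x$'s, giving $\poly(d)$ points. I then let $P=\conv\{\pm x\}$ over the collected points. Since these points lie in $K$, we have $P\subseteq K$. For the reverse inclusion, the John decomposition $\mathrm{Id}=\sum c_i u_iu_i^\top$ on contact points shows that $q$ is controlled by the maxima over the contact points. This gives $h_K(a)\le D^{1/(4k)}\max_i \langle x_i,a\rangle\cdot O(1)$, hence $K\subseteq sP$.

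The main obstacle is that last step: relating $q(a^{\otimes 2k})$ back to $\max_i\langle x_i,a\rangle^{2k}$ only over the chosen points, rather than over all of $\widehat K$. The difficulty is that contact points of John's ellipsoid certify the lower bound only through the decomposition of identity. I would handle this by using $\sum_i c_i\langle u_i,z\rangle^2=q(z)$ with $\sum c_i=D$, which gives $q(z)\le D\max_i \langle u_i,z\rangle^2$. Since this costs only another $D^{1/(4k)}$ factor, it remains polynomially harmless after the root. Finally, I would choose $k=O(1)$ so that $D^{O(1/k)}$ equals $O(\sqrt d)$ times a constant, and check that the number of vertices stays polynomial.
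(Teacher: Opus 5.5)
The paper does not prove this statement; it is quoted from Barvinok. Judged on its own, your argument has a genuine gap, and it sits exactly where you call the extra cost ``polynomially harmless''.

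With the \emph{inscribed} John ellipsoid $E\subseteq L\subseteq\sqrt{D}\,E$ of $L:=\conv\{\pm x^{\otimes 2k}:x\in K\}$, you lose $\sqrt{D}$ twice: once in $h_L(z)\le\sqrt{D}\,q(z)^{1/2}$, and again in $q(z)^{1/2}\le\sqrt{D}\max_i|\langle u_i,z\rangle|$. (The Carath\'eodory step adds no further loss.) At $z=a^{\otimes 2k}$ this gives $h_K(a)^{2k}\le D\max_j\langle x_j,a\rangle^{2k}$, so your factor is $s=D^{1/(2k)}$.

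But $D=\binom{d+2k-1}{2k}\ge\bigl(\frac{d}{2k}\bigr)^{2k}$, so $s\ge d/(2k)$. The ``extra'' factor $D^{1/(4k)}$ is itself of order $\sqrt{d/k}$, so multiplying by it squares the bound. Polynomially many vertices force $k=O(1)$, which leaves $s=\Theta(d)$. That is no better than the trivial bound: the contact points of the John ellipsoid of $K$ itself already give $s\le d$. Reaching $O(\sqrt{d})$ along your route would require $k\gtrsim\sqrt{d}$, hence $d^{\Theta(\sqrt{d})}$ vertices. Your aside about choosing $k\approx d/\log d$ is also confused, since $D^{1/(4k)}=O(\sqrt{d/k})$ holds for every $k$.

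The missing idea is to use the \emph{circumscribed} (minimal-volume, L\"owner) ellipsoid $E\supseteq L$ instead. Then $h_L\le h_E$ costs nothing. After normalizing $E=\B_2^D$, the contact points $u_i$ satisfy $\sum_i c_iu_iu_i^\top=I_D$ and $\sum_i c_i=D$, so
\[
h_L(z)\le\|z\|_2=\Bigl(\sum_i c_i\langle u_i,z\rangle^2\Bigr)^{1/2}\le\sqrt{D}\,\max_i|\langle u_i,z\rangle| .
\]
This is a single $\sqrt{D}$ loss, and the inequality is linear-invariant, so it holds in the original coordinates too. Moreover, $E$ is strictly convex, so every contact point in $\partial E\cap L$ is an extreme point of $L$. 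Each therefore equals $\pm x_i^{\otimes 2k}$ for some $x_i\in K$, so no Carath\'eodory step is needed, and there are at most $D(D+1)/2$ of them. Evaluating at $z=a^{\otimes 2k}$ yields $K\subseteq D^{1/(4k)}\conv\{\pm x_i\}$ with $D^{1/(4k)}\le\sqrt{e(d+2k)/(2k)}$.

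In fact the lifting is unnecessary for the statement as quoted. The same argument applied to the L\"owner ellipsoid of $K$ in $\R^d$ gives $K\subseteq\sqrt{d}\,\conv\{\pm u_i\}$ with at most $d(d+1)$ vertices $u_i\in K$. The tensor lifting, combined with the circumscribed ellipsoid, is what yields the sharper trade-off $s=O(\sqrt{d/k})$ with $d^{O(k)}$ vertices, which is the actual content of Barvinok's result. Your initial John normalization of $K$ in $\R^d$ is not needed either.
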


We now highlight the key negative implication of \Cref{thm:hardness-general-cvx-bodies}. Upon examining its proof, we observe that the tight $\Theta(\nicefrac{d}{\log d})$ containment result arises from two symmetric convex bodies—implying that symmetry alone does not mitigate hardness for containment of centered bodies in the oracle model. Crucially, this construction shows that Barvinok's existential result (\Cref{thm:barvinok}) cannot be made efficiently algorithmic: \Cref{cor:vertex-apx-symmetric} proves that any such algorithm would contradict the $\Omega(\nicefrac{d}{\log d})$ hardness for containment.

\begin{corollary}\label{cor:vertex-apx-symmetric}
    For $s ∈ o (\nicefrac{d}{\log d})$, there does not exist a randomized polynomial time algorithm that computes a polytope $P$ such that $\nicefrac{1}{s}P \subseteq K \subseteq P$ for each $K \subseteq \R^d$ with higher than constant probability. In particular, this rules out an efficient algorithm for finding the polytope in \Cref{thm:barvinok}.
\end{corollary}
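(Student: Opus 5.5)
We argue by contradiction: an efficient algorithm computing such a polytope would solve \textsc{Gap-Containment} with a factor that \Cref{thm:hardness-general-cvx-bodies} rules out. Suppose there is a randomized polynomial-time algorithm $\mathcal A$. Given oracle access to a symmetric convex body $K$, it outputs a polytope $P=\conv\{x_1,\dots,x_N\}$ with $\nicefrac{1}{s}P\subseteq K\subseteq P$, for some $s\in o(\nicefrac{d}{\log d})$, with probability bounded away from the constant in the hardness statement. Since $\mathcal A$ runs in polynomial time, it writes down only polynomially many vertices, so $N=\poly(d)$.

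The decision procedure is the one used for \Cref{cor:sampling-general-convex}. Given membership oracles for $K$ and $Q$, run $\mathcal A$ on $K$ to obtain $P$. Then query the $Q$-oracle on every vertex $x_i$.
\begin{itemize}
\item If every $x_i\in Q$, then $K\subseteq P=\conv\{x_i\}\subseteq Q$ by convexity of $Q$. We answer \textsc{Yes}, and this answer is correct.
\item If some $x_i\notin Q$, then $x_i\in P\subseteq sK$. Hence $sK\nsubseteq Q$ and we answer \textsc{No}. This answer is never wrong on a \textsc{Yes}-instance: if $K\subseteq Q$, every answer of \textsc{No} would require some $x_i\notin Q$, and a point of $P$ outside $Q$ need not lie in $K$, so we must rule out this case separately.
\end{itemize}

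The second case is where the argument needs care. We fix it by applying $\mathcal A$ to $K$ with the roles arranged as in the statement: $\nicefrac1sP\subseteq K$. Query the scaled vertices $\nicefrac{1}{s}x_i$, all of which lie in $K$.
\begin{itemize}
\item If some $\nicefrac1s x_i\notin Q$, then $K\nsubseteq Q$, so we correctly answer \textsc{No}.
\item If all $\nicefrac1s x_i\in Q$, then $\nicefrac1s P\subseteq Q$, and so $K\subseteq P\subseteq sQ$. Then the instance cannot satisfy $K\nsubseteq sQ$, and we answer \textsc{Yes}.
\end{itemize}
This decides the gap problem, distinguishing $K\subseteq Q$ from $K\nsubseteq sQ$ (equivalently $\tfrac{1}{s}K\nsubseteq Q$ up to the scaling convention of \eqref{eq:gap}). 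It uses polynomially many oracle calls and has the same success probability as $\mathcal A$.

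Finally, the hard instances in the proof of \Cref{thm:hardness-general-cvx-bodies} come from \Cref{lem:hardness-radius-containment} with $K=P$ and $Q=P^\circ$. Both bodies are centrally symmetric, and membership in $Q$ is simulated by the reduction exactly as in that lemma. So $\mathcal A$ only ever needs to run on symmetric $K$, which contradicts \Cref{thm:hardness-general-cvx-bodies} for $s\in o(\nicefrac{d}{\log d})$.

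The main obstacle is this last step: making sure the hardness instance really gives oracle access to $K$ in a form $\mathcal A$ can consume. If the oracle for $P^\circ$ is the issue, note that we only query $Q$ on the explicit points $\nicefrac1s x_i$, so a membership oracle for $Q$ suffices. We also need the hard distribution to have constant success probability, so that an algorithm succeeding "with higher than constant probability" breaks it.

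Barvinok's \Cref{thm:barvinok} gives $s=O(\sqrt d)=o(\nicefrac{d}{\log d})$ with $\poly(d)$ vertices. The contradiction therefore shows that no efficient algorithm can find that polytope.
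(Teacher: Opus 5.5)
Your proposal is correct and is essentially the paper's proof: run the hypothetical algorithm on the hard symmetric body from \Cref{thm:hardness-general-cvx-bodies}, test the polynomially many vertices of $P$ against $Q$ to get a contradiction, and use Barvinok's $s=O(\sqrt d)=o(\nicefrac{d}{\log d})$ for the final claim. The detour through the scaled vertices $\nicefrac1s x_i$ is harmless but not needed: under \eqref{eq:gap} an answer of \textsc{No} is correct whenever $sK\nsubseteq Q$, even if $K\subseteq Q$ also holds, so your first procedure (which is exactly the paper's) already solves the gap problem, and your worry about it is unfounded; the scaled version is just that procedure applied to the instance $(\nicefrac1s K,Q)$ with polytope $\nicefrac1s P$, and it matches the promise form ``$K\subseteq Q$ versus $K\nsubseteq r^2Q$'' of \Cref{lem:hardness-radius-containment}.
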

\begin{proof} 
    Assume for the sake of contradiction that  we could find, with at least constant probability, a polytope $P$ with $\poly(d)$ many vertices that approximates the symmetric convex body $K$ in the proof of \Cref{thm:hardness-general-cvx-bodies} as
    \[
        \frac{1}{s}P \subseteq K \subseteq P,
    \]
    for $s \in o(\nicefrac{d}{\log d})$. We could then check whether or not $P \subseteq Q$ exactly by exhaustive search on all of its polynomially many vertices. If $P \subseteq Q$, it follows that $K \subseteq Q$, and otherwise it follows that $s K \not\subseteq Q$. Then, this implies that we would be able to distinguish $K \subseteq Q$ from $s K \not\subseteq Q$, for $s \in o(\nicefrac{d}{\log d})$, contradicting \Cref{thm:hardness-general-cvx-bodies}.
    Therefore the existence result in \Cref{thm:barvinok} cannot be made efficiently constructive unless $s \in \Omega(\nicefrac{d}{\log d})$. 
\end{proof}
We note that the result from \Cref{cor:vertex-apx-symmetric} is related to the fact that on one hand for a symmetric convex body $K \subseteq \R^d$, there is an ellipsoid $E$ --- called the \emph{L\"owner-John ellipsoid} --- so that $E \subseteq K \subseteq \sqrt{d}E$. However, if $K$ is only given by an oracle, then a factor $O(d)$ is best possible in polynomial time, for example by computing an approximate inertia ellipsoid or by a modified ellipsoid method~\cite[Theorem 4.6.3]{DBLP:books/sp/GLS1988}. In contrast, if $K$ is given in inequality description, then for any $\varepsilon > 0$, an ellipsoid with a $(1+\varepsilon)\sqrt{d}$ factor can be computed in polynomial time~\cite{pmlr-v99-cohen19a}. 

{\small
\printbibliography
}

\end{document}

%% file: settings/packages.tex
\usepackage[margin=0.95in]{geometry}
\usepackage{inconsolata}
\usepackage{libertine}
\usepackage[T1]{fontenc}
\usepackage[utf8]{inputenc}
\usepackage{enumerate}
\usepackage{utf8math}
\usepackage[tbtags]{amsmath}
\allowdisplaybreaks
\usepackage{amssymb,amsthm,mathtools, bbm}
\usepackage{physics}
\usepackage{hyperref}
\usepackage[svgnames]{xcolor}
\usepackage[capitalise,nameinlink]{cleveref}
\definecolor{links}{RGB}{11, 85, 255}
\definecolor{cites}{RGB}{0, 200, 0}
\definecolor{urls}{RGB}{255, 116, 0}
\hypersetup{
    colorlinks=true,
    linkcolor=black,
    citecolor=black,
    urlcolor=black
}
\usepackage[
    style=numeric, %
    natbib=true,      %
    doi=false,        %
    url=false,        %
    isbn=false,       %
    backend=biber     %
]{biblatex}
\usepackage{doi}
\usepackage{tikz}
\usetikzlibrary{positioning, arrows.meta, shapes.geometric}
\usepackage{pgfplots}
\pgfplotsset{compat=1.14}
\usepgfplotslibrary{fillbetween}
\usepackage{hyphenat}
\usepackage[font=footnotesize]{caption}
\usepackage{subcaption}
\usepackage{appendix}
\usepackage{thm-restate}
\usepackage{nicefrac}
\usepackage{tcolorbox}
\usepackage{bm}

\usepackage{todonotes}

\usepackage{algorithm}
\usepackage[noend]{algpseudocode}

%% file: settings/macros.tex
\newcommand{\cE}{\mathcal{E}}

\newcommand{\vol}{\textup{Vol}}
\newcommand{\poly}{\textup{poly}}
\newcommand{\conv}{\textup{conv}}

\newcommand{\N}{\mathbb{N}}

\newcommand{\R}{\mathbb{R}}
\renewcommand{\S}{\mathbb{S}}

\newcommand{\E}[2]{\underset{#1}{\mathbb{E}}\left[#2\right]}
\newcommand{\B}{\mathbb{B}}
\renewcommand{\P}[2]{\underset{#1}{\mathbb{P}}\left[#2\right]}
\renewcommand{\exp}[1]{{\textup{exp}}\left(#1\right)}

\newcommand{\ind}[1]{\mathbf{1}_{#1}}

\newcommand{\indep}{\perp\kern-5pt\perp}
\newcommand{\sign}[1]{\mathrm{sign}\left(#1\right)}

\newcommand{\scalar}[1]{\left\langle#1\right\rangle}

%% file: settings/environments.tex
\theoremstyle{theorem}

\newtheorem{theorem}{Theorem}[section]
\newtheorem*{theorem*}{Theorem}
\newtheorem{lemma}[theorem]{Lemma}
\newtheorem{proposition}[theorem]{Proposition}
\newtheorem{corollary}[theorem]{Corollary}

\theoremstyle{remark}
\newtheorem*{remark}{Remark}

\newtheorem{conjecture}[theorem]{Conjecture}